\DeclareMathOperator{\cnot}{CNOT}
\newcommand{\I}{\textnormal{\texttt{i}}}
\newcommand{\conj}[1]{\overline{#1}}
\newtheorem{thrm}{Theorem}
\newtheorem{definition}[thrm]{Definition}
\newtheorem{lemma}[thrm]{Lemma}
\newtheorem{conjecture}[thrm]{Conjecture}
\newcolumntype{L}[1]{>{\hsize=#1\hsize\raggedright\arraybackslash}X}%
\newcolumntype{R}[1]{>{\hsize=#1\hsize\raggedleft\arraybackslash}X}%
\newcolumntype{C}[1]{>{\hsize=#1\hsize\centering\arraybackslash}X}%
\newcommand{\myqctmp}[2][0.75]{\Qcircuit @C=#2em @R=#1em @!R}
\newcommand{\MG}{\multigate1{G}}
\newcommand{\mg}{\ghost{G}}
\newcommand{\gw}{\push{\rule{0em}{2ex}}\qw}
\newcommand{\MGG}[1]{\multigate1{G(\Theta_{#1})}}
\newcommand{\mgg}[1]{\ghost{G(\Theta_{#1})}}
\newcommand{\Rxx}[1]{\gate{R_x(\theta_{#1})}}
\newcommand{\Ryy}[1]{\gate{R_y(\theta_{#1})}}
\newcommand{\Rzz}[1]{\gate{R_z(\theta_{#1})}}
\newcommand{\Rxp}{\gate{R_x(\nicefrac\pi2)}}
\newcommand{\Rxm}{\gate{R_x(\nicefrac{-\pi}2)}}
\newcommand{\Rzp}{\gate{R_z(\nicefrac\pi2)}}
\newcommand{\Rzm}{\gate{R_z(\nicefrac{-\pi}2)}}
\newcommand{\qdots}{\push{~\cdots~}\qw}
\newcommand{\bA}{\textcolor{red} {\boldsymbol a}}
\newcommand{\bB}{\textcolor{blue}{\boldsymbol b}}
\newcommand{\bC}{\textcolor{cyan}{\boldsymbol c}}
\newcommand{\bD}{\textcolor{cyan}{\boldsymbol d}}
\newcommand{\bE}{\textcolor{red} {\boldsymbol e}}
\newcommand{\bF}{\textcolor{blue}{\boldsymbol f}}
\newcommand{\bG}{\textcolor{cyan}{\boldsymbol g}}
\newcommand{\bH}{\textcolor{cyan}{\boldsymbol h}}
\newcommand{\Hcd}{\mathcal{H_{CD}}}
\begin{document}

\title{Constant-Depth Circuits for Dynamic Simulations of Materials on Quantum Computers}

\author{Lindsay Bassman}
\affiliation{Lawrence Berkeley National Lab, Berkeley, CA 94720}
\author{Roel Van Beeumen}
\affiliation{Lawrence Berkeley National Lab, Berkeley, CA 94720}
\author{Ed Younis}
\affiliation{Lawrence Berkeley National Lab, Berkeley, CA 94720}
\author{Ethan Smith}
\affiliation{University of California Berkeley, Berkeley, CA 94720}
\author{Costin Iancu}
\affiliation{Lawrence Berkeley National Lab, Berkeley, CA 94720}
\author{Wibe A. de Jong}
\affiliation{Lawrence Berkeley National Lab, Berkeley, CA 94720}

\begin{abstract}
Dynamic simulation of materials is a promising application for near-term quantum computers.  Current algorithms for Hamiltonian simulation, however, produce circuits that grow in depth with increasing simulation time, limiting feasible simulations to short-time dynamics.  Here, we present a method for generating circuits that are constant in depth with increasing simulation time for a subset of one-dimensional materials Hamiltonians, thereby enabling simulations out to arbitrarily long times. Furthermore, by removing the effective limit on the number of feasibly simulatable time-steps, the constant-depth circuits enable Trotter error to be made negligibly small by allowing simulations to be broken into arbitrarily many time-steps.  Composed of two-qubit matchgates on nearest-neighbor qubits, these constant-depth circuits are constructed based on a set of multi-matchgate identity relationships.  For an $N$-spin system, the constant-depth circuit contains only $\mathcal{O}(N^2)$ $\cnot$ gates.  When compared to standard Hamiltonian simulation algorithms, our method generates circuits with order-of-magnitude fewer gates, which allows us to successfully simulate the long-time dynamics of systems with up to 5 spins on available quantum hardware.  This paves the way for simulations of long-time dynamics for scientifically and technologically relevant quantum materials, enabling the observation of interesting and important atomic-level physics.
\end{abstract}

\maketitle
\section{Introduction}
While a quantum advantage was recently achieved with random circuits \cite{arute2019quantum}, it remains a challenge to demonstrate a quantum advantage for an application of interest within the physical sciences, a feat which has been dubbed ``physical quantum advantage''.  This is because current and near-term quantum computers, otherwise known as noisy intermediate-scale quantum (NISQ) computers \cite{preskill2018quantum}, have low qubit counts and suffer from short qubit decoherence times and high gate error rates, making it difficult to perform relevant, large-scale computations.  Given such constraints, long-anticipated applications like number factorization \cite{shor1999polynomial} and unordered database search \cite{grover1996fast} are still far out of reach for NISQ computers.  Quantum computers, however, are intrinsically fit for efficiently simulating quantum systems \cite{feynman1982simulating, lloyd1996universal, abrams1997simulations, zalka1998simulating}, making the dynamic simulation of quantum materials a leading ``killer application" for NISQ computers.  Rapid progress in both quantum hardware and software may soon allow for such simulations to not only demonstrate a physical quantum advantage, but to advance such fields as condensed matter physics, quantum chemistry, and materials science.  

One of the major challenges with performing dynamic simulations on NISQ devices is keeping the quantum circuits small enough to produce high-fidelity results.  Dynamic simulations require the execution of one circuit per time-step, where each circuit implements the time-evolution operator from the initial time to the given time-step \cite{bassman2021simulating}.  Current algorithms for dynamic materials simulations produce quantum circuits whose depths grow with increasing simulation time-steps \cite{wiebe11, childs2018toward}.  Thus, an essential part of the workflow for simulating the dynamics of materials on NISQ computers is quantum circuit optimization, which can minimize the depth of the circuits produced by current algorithms.  Already, a great deal of research has focused on general circuit optimization (i.e. minimization) \cite{mottonen2004quantum, de2016block, iten2016quantum, martinez2016compiling, khatri2019quantum, murali2019noise, younis2020qfast, cincio2020machine}, and domain-specific circuit optimizers, which focus on optimizing certain types of circuits for specific applications, have been suggested \cite{bassman2020domain} as a method to reduce to complexity of the optimization problem, which in general is NP-hard \cite{botea2018, herr17}.  

According to the ``no-fast-forwarding theorem'', simulating the dynamics of a system under a generic Hamiltonian $H$ for a time $t$ requires $\Omega(t)$ gates \cite{berry2007efficient, childs2009limitations}, implying that circuit depths grow at least linearly with the number of time-steps.  It has been shown, however, that quadratic Hamiltonians can be fast forwarded, meaning the evolution of the systems under such Hamiltonians can be simulated with circuits whose depths do not grow significantly with the simulation time \cite{atia2017fast}.  A recent work took advantage of this to variationally compile approximate circuits with a hybrid classical-quantum algorithm for fast-forwarded simulations \cite{cirstoiu2020variational}.  The circuits, however, are approximate, with error that grows with increasing fast-forwarding time.

Here, we present an algorithm for producing quantum circuits that are constant in depth with increasing time-step count for simulations of materials governed by special models derived from the Heisenberg Hamiltonian, which we denote by $\Hcd$ and define in \cref{sec_theory}.  The constant-depth circuits have a fixed structure, with only the single-qubit rotation angles changing with the addition of more time-steps.  The structure has a regular pattern which can be easily extrapolated to build circuits for any system size; for an $N$-spin system, the circuit structure contains only $N(N-1)$ $\cnot$ gates.  Furthermore, the circuits are exact up to Trotter error, which we argue, can be practically eliminated.  This is because Trotter error scales with the size of the simulation time-step, and the constant-depth nature of the circuits allows for a simulation to be feasibly broken into arbitrarily many (i.e., arbitrarily small) time-steps.

The circuits are comprised of two-qubit gates, known as matchgates \cite{valiant2002quantum}.  While generic matchgates decompose into native-gate circuits with three $\cnot$ gates \cite{vidal2004universal}, the matchgates for Hamiltonians in $\Hcd$, have the special property that they only require two $\cnot$ gates in their decomposition.  This special property allows us to introduce a set of conjectured matchgate identities, which enable the downfolding of our circuits for dynamic simulations into constant-depth for any number of time-steps.  For a given system size, if the fixed circuit structure is small enough to achieve high-fidelity results on a NISQ computer, the dynamics of that system can be successfully simulated out to arbitrarily long times.  By removing the limit on the number of simulation time-steps that can feasibly be simulated, these constant-depth circuits allow for long-time dynamic simulations, with minimized Trotter error, that can give insights into complex molecular reactions, transformations, and equilibration.

\section{Theoretical Background}
\label{sec_theory}
The quantum circuits for dynamic simulations of quantum materials must implement the time-evolution operator between the initial time (which we set to 0) and some final time $t$, given by
\begin{equation}
    U(0,t)\equiv U(t)=\mathcal{T} \exp(-i\int_{0}^{t}H(t)dt)
\end{equation}
where $\mathcal{T}$ indicates a time-ordered exponential and $H(t)$ is the time-dependent Hamiltonian of the material.  In general, this operator is challenging to compute exactly due to the time-dependence of the Hamiltonian and the exponentiation the Hamiltonian.  First, an approximation must be made which transforms $H(t)$ into a piece-wise function by discretizing time into small time-steps over which $H(t)$ is constant \cite{poulin2011quantum}.  For small system sizes, it is then possible to compute $U(t)$ by exact diagonalization of the Hamiltonian, however, this task becomes exponentially harder with increasing system size.  Thus for larger system sizes, a second approximation must be made to exponentiate the Hamiltonian.  Typically, the Trotter decomposition \cite{trotter1959product} is used, which splits the Hamiltonian into components that are each easy to diagonalize.  With these two approximations, the time-evolution operator becomes
\begin{equation} \label{unitary}
 U(n\Delta t) = \prod_{\tau=1}^{n} \prod_{l} e^{-iH_l(t_{\tau})\Delta t} + \mathcal{O}(\Delta t)  
\end{equation}
where $\tau$ multiplies over the number of discretized time-steps $\Delta t$ and $l$ multiplies over the components into with $H(t)$ was divided.  

The error generated from the Trotter decomposition, known as Trotter error, can be a significant source of error, scaling with the size of the simulation time-step $\Delta t$.  Dynamic simulations based on standard Hamiltonian simulation algorithms must strike a balance when selecting the size of $\Delta t$.  This is because standard algorithms produce circuits which grow in depth with increasing numbers of time-steps, which in turn limits the number of time-steps that are feasible to simulate to just a handful \cite{smith2019simulating}.  While decreasing $\Delta t$ will lower Trotter error, making $\Delta t$ too small will not allow for a long enough total simulation time, since the number of time-steps is limited.  Our constant-depth circuits, however, remove the limitation on the number of time-steps that can be feasibly simulated, since the circuits do not get deeper with higher time-step count.  This allows for the time-step $\Delta t$ to be made arbitrarily small, which in turn allows one to decrease the Trotter error to negligible values.  Such practical elimination of Trotter error with constant-depth circuits can enable far more accurate simulation results for long-time dynamic simulations.

The constant-depth circuits we introduce here simulate the dynamical evolution of a quantum material whose Hamiltonian is a simplified version of the one-dimensional (1D) Heisenberg model, as explained below.  The Heisenberg Hamiltonian is defined as
\begin{equation}
H(t) = -\sum_{\alpha}\{J_{\alpha}\sum_{i=1}^{N-1} \sigma_{i}^{\alpha}\sigma_{i+1}^{\alpha}\} - h_{\beta}(t) \sum_{i=1}^{N} \sigma_{i}^{\beta}
\label{Hamiltonian}
\end{equation}
where $\alpha$ sums over $\{x,y,z\}$, the coupling parameters $J_{\alpha}$ denote the exchange interaction between nearest-neighbor spins along the $\alpha$-direction, $\sigma_{i}^{\alpha}$ is the $\alpha$-Pauli matrix acting on qubit $i$, and $h_{\beta}(t)$ is the time-dependent amplitude of an external magnetic field along the $\beta$-direction, where $\beta \in \{x,y,z\}$.  This Hamiltonian is thus defined by the set of its parameters $\{J_x, J_y, J_z, h_\beta(t)\}$.  We denote the set of all parameter sets as $\mathcal{H}$.  The full Heisenberg model is obtained when all parameters in the set are non-zero, however a number of ubiquitous models can be derived by setting various parameters to zero.

\Cref{tab:const-depth} shows all subsets $\Hcd$ of $\mathcal{H}$ for which we find that our constant-depth circuits work.  The rows of the table denote either the direction of the external magnetic field $h_{\beta}$ or a lack of field, while the columns label which of the coupling parameters are non-zero.  The first three columns denote parameter sets where one coupling term is non-zero, the next three columns denote sets where two coupling terms are non-zero, while the final column denotes the sets where all three coupling parameters are non-zero.  An $\times$ appears in table entries for parameter sets that define Hamiltonians in $\Hcd$, which can be simulated with our constant-depth circuits. Note that $J_x \cdot J_y \cdot J_z = 0$ is a necessary but not sufficient condition for constant-depth.
\begin{table}
\centering
\begin{tabularx}{\columnwidth}{C{.3}|C{.6}C{.6}C{.6}|C{1.3}C{1.3}C{1.3}|C{2}}
                & $J_x$ & $J_y$ & $J_z$ & $J_x+J_y$ & $J_x+J_z$ & $J_y+J_z$ & $J_x+J_y+J_z$ \\ \hline
  $x$           & $\times$&$\times$&$\times$ &         &        &$\times$ \\
  $y$           & $\times$&$\times$&$\times$ &         &$\times$&         \\ 
  $z$           & $\times$&$\times$&$\times$ & $\times$&        &         \\ 
  $\varnothing$ & $\times$&$\times$&$\times$ & $\times$&$\times$&$\times$ \\
\end{tabularx}
\caption{Subsets of Heisenberg parameters $\Hcd$ for which circuits are constant-depth.  The rows denote the direction of the external field or a lack of a field. The columns denote the non-zero coupling parameters.  Table entries marked with an $\times$ denote parameter sets that represent Hamiltonians for which our constant-depth circuits work. }
\label{tab:const-depth}
\end{table}
As all $\Hcd$ Hamiltonians of \cref{tab:const-depth} are quadratic, it is possible to fast-forward simulations under their time-evolution \cite{atia2017fast}.  In \cref{sec_results}, we demonstrate simulations with our constant-depth circuits for two important models in $\Hcd$: (i) the XY model, where $J_z = 0$ and $h_\beta = 0$, and (ii) the transverse field Ising model (TFIM), where $J_y = J_z = 0$.  Dynamics of these models have recently been simulated on quantum computers, but lack of constant-depth circuits either limited the number of time-steps that could be successfully simulated \cite{smith2019simulating}.

\section{Construction of Constant-Depth Circuits}
To arrive at the circuit structure for the constant-depth circuits, we begin by laying down the gates that implement evolution of the system by one time-step, $U(\Delta t)$.  Due to the quadratic nature of $\Hcd$ Hamiltonians, which only contain coupling interactions between nearest neighbor spins, the circuit for evolution of one time-step can be constructed by a set of two-qubit gates which act on each of the pairs of nearest neighbor qubits.  For example, for six qubits, this circuit is given by
\begin{equation}
  U(\Delta t) = {\small\begin{myqcircuit}[0.75]
  & \MGG1 & \qw   & \qw \\
  & \mgg1 & \MGG2 & \qw \\
  & \MGG3 & \mgg2 & \qw \\
  & \mgg3 & \MGG4 & \qw \\
  & \MGG5 & \mgg4 & \qw \\
  & \mgg5 & \qw   & \qw \\
\end{myqcircuit}}
\label{eq:timestep1}
\end{equation}
where each gate labeled $G(\Theta_i)$ is a two-qubit gate defined by some set of parameters $\Theta_i$.  For ease of notation, the parameter set $\Theta_i$ is dropped in subsequent labeling of these gates, which will simply be labeled with a $G$. However, it must be emphasized that each two-qubit gate has its own unique parameter set.  Each additional time-step requires one additional repetition of the circuit for one time-step.  In this manner, it is possible to construct circuits for dynamic simulations that grow with increasing numbers of time-steps.  We refer to these circuits has the ``naive circuits''  for dynamic simulations.  The naive circuit for $n$ time-steps for six qubits is thus given by 
\begin{equation}
  U(n\Delta t) = {\small\begin{myqcircuit}[0.5]
  & \MG & \qw   & \qdots & \MG & \qw   & \qw \\
  & \mg & \MG & \qdots & \mg & \MG & \qw \\
  & \MG & \mg & \qdots & \MG & \mg & \qw \\
  & \mg & \MG & \qdots & \mg & \MG & \qw \\
  & \MG & \mg & \qdots & \MG & \mg & \qw \\
  & \mg & \qw   & \qdots & \mg & \qw   & \qw \\
\end{myqcircuit}}
\label{eq:timestep_n}
\end{equation}
where there are $2n$ columns of $G$ gates for $n$ time-steps.  We now show that it is possible to reduce the naive circuits for higher numbers of time-steps down to a constant-depth circuit which is comprised of $N$ columns of $G$ gates for an $N$-spin system, where each column alternates placing the top of the first $G$ gate on the first or second qubit.  The ability to ``downfold'' longer circuits into constant-depth circuits is derived from special properties of these $G$ gates.

In fact, the $G$ gates belong to a special group of two-qubit gates known as matchgates \cite{valiant2002quantum}.
\begin{definition}
\label{def:matchgate}
Let the matrices $A$ and $B$ be in $SU(2)$
\begin{align}
A &= \begin{bmatrix} p & q \\ r & s \end{bmatrix}, &
B &= \begin{bmatrix} w & x \\ y & z \end{bmatrix},
\end{align}
with $\det(A) = \det(B)$.
Then the two-qubit matchgate $G(A,B)$ is defined as follows
\begin{equation}
G(A,B) =
\begin{bmatrix}
p &   &   & q \\
  & w & x \\
  & y & z \\
r &   &   & s
\end{bmatrix}.
\label{eq:G}
\end{equation}
\end{definition}
Matchgates have the important property that the product of two matchgates is again a matchgate and will be a key feature to arrive at constant-depth circuits.

\begin{lemma}
\label{lem:prod-match}
Let $G(A_1,B_1)$ and $G(A_2,B_2)$ be matchgates, then the matrix
\begin{equation}
G(A_3,B_3) = G(A_1,B_1) \, G(A_2,B_2),
\end{equation}
is again a matchgate with $A_3 = A_1 A_2$ and $B_3 = B_1 B_2$.
\end{lemma}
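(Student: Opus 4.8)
The plan is to prove \cref{lem:prod-match} by direct block-matrix multiplication, exploiting the sparsity structure of the matchgate form in \eqref{eq:G}. First I would observe that a matchgate $G(A,B)$ acts nontrivially only by coupling the ``outer'' computational basis states $\{\ket{00},\ket{11}\}$ through $A$ and the ``inner'' states $\{\ket{01},\ket{10}\}$ through $B$. That is, if we reorder the basis as $(\ket{00},\ket{11},\ket{01},\ket{10})$, the matrix \eqref{eq:G} becomes block-diagonal, $G(A,B) \cong A \oplus B$, with $A$ acting on the outer block and $B$ on the inner block. This reordering is a fixed permutation independent of the matrices involved, so it is a similarity transformation by a permutation matrix $P$ with $P^{-1} G(A,B) P = A \oplus B$.

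The key step is then immediate from the multiplicativity of block-diagonal matrices. Conjugating the product by the same permutation gives
\begin{equation}
P^{-1} G(A_1,B_1)\,G(A_2,B_2)\, P = (A_1 \oplus B_1)(A_2 \oplus B_2) = (A_1 A_2)\oplus(B_1 B_2),
\end{equation}
so the product again has matchgate form with outer block $A_3 = A_1 A_2$ and inner block $B_3 = B_1 B_2$. It remains only to verify that $G(A_3,B_3)$ is a legitimate matchgate in the sense of \cref{def:matchgate}, namely that $A_3, B_3 \in SU(2)$ and $\det(A_3)=\det(B_3)$. Since $SU(2)$ is closed under multiplication, $A_3=A_1A_2$ and $B_3=B_1B_2$ are again in $SU(2)$; and the determinant condition follows from multiplicativity of the determinant together with the hypotheses $\det(A_i)=\det(B_i)$, giving $\det(A_3)=\det(A_1)\det(A_2)=\det(B_1)\det(B_2)=\det(B_3)$.

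I do not anticipate a genuine obstacle here: the entire content is bookkeeping. The only point requiring minor care is confirming that the off-diagonal entries of the product vanish in the original ordering, i.e.\ that multiplying two matchgates does not generate matrix elements outside the sparse pattern of \eqref{eq:G}. This is exactly what the block-diagonalization guarantees, so carrying out the permutation explicitly (rather than multiplying the $4\times 4$ matrices entrywise) is the cleanest route and sidesteps any tedium. One could alternatively verify the four relevant entries $p_3,q_3,r_3,s_3$ and $w_3,x_3,y_3,z_3$ by hand to confirm they reproduce the entries of $A_1A_2$ and $B_1B_2$, but the block argument makes this redundant.
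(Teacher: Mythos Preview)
Your proposal is correct and is essentially the same approach as the paper's, which states only that ``the proof directly follows from carrying out the matrix-matrix multiplication.'' Your block-diagonalization via the basis permutation is simply a clean way to organize that computation, and your explicit check of the $SU(2)$ closure and determinant condition fills in a small detail the paper leaves implicit.
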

\begin{proof}
The proof directly follows from carrying out the matrix-matrix multiplication.
\end{proof}
A graphical representation of \cref{lem:prod-match} is given by
\begin{equation}\small
\begin{myqcircuit}
  & \MG & \MG & \qw \\
  & \mg & \mg & \qw
\end{myqcircuit}
\ \equiv \
\begin{myqcircuit}
  & \MG & \qw \\
  & \mg & \qw
\end{myqcircuit}
\label{eq:equiv2}
\end{equation}

The decomposition of a general matchgate into a native-gate circuit, which can be executed on NISQ devices, requires three $\cnot$ gates \cite{vidal2004universal}.  However, all the matchgates for $\Hcd$ have a special structure which allows them to be decomposed into native-gate circuits with only two $\cnot$ gates.  In $\Hcd$ cases with an external magnetic field along the $x$- or $y$-directions, the gates $G$ in \eqref{eq:timestep1} do not have the matchgate structure but are spectrally equivalent with a matchgate and are in fact matchgates up to some $\nicefrac{\pi}{2}$ rotations.  Matchgates and their corresponding decomposition into native-gate circuits with two-$\cnot$s are given in \cref{app:matchgates}. 

The ability to contract the circuits to constant depth relies on an identity that we conjecture for these special $\Hcd$ matchgates.

\begin{conjecture}
\label{conj:vee-hat}
Let $G_1,G_2,G_3$ be matchgates of a certain type in $\Hcd$, then there exist three corresponding matchgates $G_4,G_5,G_6$ of the same type so that
\begin{equation}
(G_1 \otimes I)(I \otimes G_2)(G_3 \otimes I)
= (I \otimes G_4)(G_4 \otimes I)(I \otimes G_6).
\end{equation}
\end{conjecture}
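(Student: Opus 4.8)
The plan is to recast the identity in the free-fermion picture, where it becomes a statement about factorizing one orthogonal transformation in two different orders. Under the Jordan--Wigner transformation the three qubits carry six Majorana modes $\gamma_1,\dots,\gamma_6$, and by \cref{def:matchgate} a nearest-neighbor matchgate acts on them by conjugation as a special orthogonal transformation: a matchgate on qubits $1,2$ (the factors $G_1,G_3,G_5$) is supported on the block $\{\gamma_1,\gamma_2,\gamma_3,\gamma_4\}$, while a matchgate on qubits $2,3$ (the factors $G_2,G_4,G_6$) is supported on the block $\{\gamma_3,\gamma_4,\gamma_5,\gamma_6\}$; the two blocks overlap precisely in the pair $\{\gamma_3,\gamma_4\}$ belonging to qubit $2$. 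In this representation the left-hand side is a product of block rotations in the order (block-$12$, block-$23$, block-$12$) and the right-hand side is the same kind of data in the order (block-$23$, block-$12$, block-$23$). Because conjugation on the Majorana modes is a two-to-one homomorphism onto $SO(6)$, matching the orthogonal matrices determines the gates up to an overall sign, which is then fixed by continuity from the identity since each admissible type is a connected family.

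First I would identify, for each admissible type in $\Hcd$, the one- or two-parameter subgroup $H$ that a single matchgate of that type generates, together with its infinitesimal generator written as a Majorana bilinear. For the pure-coupling models the generator for the $12$-block and the generator for the $23$-block share the modes $\gamma_3,\gamma_4$, so they do not commute; one then computes the Lie algebra they generate. The key structural claim to establish is that for each type these two generators close into an $\mathfrak{su}(2)$ -- equivalently, that the group generated by $H_{12}$ and $H_{23}$ is covered by a single $SU(2)$, with $H_{12}$ and $H_{23}$ realized as rotations about two fixed, distinct axes $\hat a$ and $\hat b$.

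Granting this, the identity is exactly the classical equivalence of two Euler-angle conventions. A rotation written in the $ZYZ$ order $R_{\hat a}(\theta_3)R_{\hat b}(\theta_2)R_{\hat a}(\theta_1)$ can always be rewritten in the $YZY$ order $R_{\hat b}(\theta_6)R_{\hat a}(\theta_5)R_{\hat b}(\theta_4)$, since both conventions parametrize all of $SO(3)$; solving the standard Euler inversion yields $\theta_4,\theta_5,\theta_6$ from $\theta_1,\theta_2,\theta_3$. Each resulting factor is a pure rotation about the $\hat a$- or $\hat b$-axis, hence lies in the same one-parameter family on the appropriate qubit pair, which is precisely the ``same type'' requirement of \cref{conj:vee-hat}. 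The coordinate singularities of the Euler map occur only on a measure-zero set and are filled in by continuity, so the decomposition exists for every input.

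The main obstacle is the structural claim of the second step for the \emph{multi-parameter} types -- the anisotropic couplings such as $J_x+J_z$, and the field-augmented cases where (as noted in the text preceding \cref{conj:vee-hat}) the gates are matchgates only up to $\nicefrac{\pi}{2}$ rotations. There a single-block subgroup is a two-torus rather than a circle, and the group generated by $H_{12}$ and $H_{23}$ may be strictly larger than $SU(2)$ (a product of two $SU(2)$ factors, or an $SU(2)$ extended by a commuting torus). In those cases one must verify separately that the extra toral directions can be absorbed into the same three-factor ``hat'' pattern and that the ``same type'' property survives; checking this uniformly across every entry of \cref{tab:const-depth}, rather than laboriously case-by-case, is exactly the gap that keeps the statement a conjecture rather than a theorem.
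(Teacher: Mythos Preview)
The paper does not prove \cref{conj:vee-hat}; it is stated explicitly as a conjecture and supported only by numerical optimization experiments (see the paragraph following \eqref{eq:equiv3}). Your proposal therefore attempts considerably more than the paper does, and the free-fermion/Majorana framework you set up is indeed the natural setting for such an argument.

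The central structural claim, however --- that for each $\Hcd$ type the two block generators close into a single $\mathfrak{su}(2)$ --- fails in essentially every case, and with it the Euler-angle reduction. For the pure single-coupling types (e.g.\ $J_x$ alone) the generator on qubits $1,2$ is $X_1X_2\propto\gamma_2\gamma_3$ and on qubits $2,3$ is $X_2X_3\propto\gamma_4\gamma_5$; these share \emph{no} Majorana modes, so they commute and the algebra is abelian (the identity is then trivial, not an Euler statement). At the other extreme, for the TFIM-type family in \cref{app:matchgates} the gate on a single pair is already a four-parameter object --- the outer block $A$ is a generic $SU(2)$ element while the inner block $B$ is a one-parameter $R_x$ --- so a single block cannot sit inside any three-dimensional $\mathfrak{su}(2)$, and the algebra generated by two overlapping blocks is a correspondingly large piece of $\mathfrak{so}(6)$. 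The $ZYZ\leftrightarrow YZY$ argument therefore does not apply, and that was your only mechanism for forcing $G_4,G_5,G_6$ to land back in the \emph{same} restricted family, which is precisely what distinguishes \cref{conj:vee-hat} from the (known) turnover identity for generic matchgates. You flag this difficulty for the multi-parameter types in your final paragraph, but it already bites for the flagship TFIM case, so the proposal does not close the gap the paper leaves open.
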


A graphical representation of \cref{conj:vee-hat} is given by
\begin{equation}\small
\begin{myqcircuit}
  & \MG & \qw & \MG & \qw \\
  & \mg & \MG & \mg & \qw \\
  & \qw & \mg & \qw & \qw
\end{myqcircuit}
\ \equiv \
\begin{myqcircuit}
  & \qw & \MG & \qw & \qw \\
  & \MG & \mg & \MG & \qw \\
  & \mg & \qw & \mg & \qw
\end{myqcircuit}
\label{eq:equiv3}
\end{equation}
Using numerical optimization to identify circuit parameters on either side of the equality, we empirically find this conjecture to be true for all trials where the circuits are comprised of $\Hcd$ matchgates.  It has, however, proven challenging to analytically compute the parameters of the circuit on the right-hand side given the parameterized circuit on the left-hand side and vice versa.  As a result, compilation of our constant-depth circuits requires numerical optimization to obtain circuit parameters.  We emphasize that the equivalence \eqref{eq:equiv3} only holds for $\Hcd$ matchgates, whereas the equivalence \eqref{eq:equiv2} holds for all matchgates.

Based on \cref{eq:equiv2,eq:equiv3} we can derive identities for higher numbers of qubits, where a set of $N$ columns of matchgates across $N$ qubits can be replaced by its mirror image, albeit with altered parameter sets for all the constituent matchgates.  A demonstration of deriving the identity for 4-qubits is shown in \cref{app:proof}.  We refer to these identities as the matchgate mirroring identities. These conjectured identities are depicted in \cref{fig:mirror_id} for four and five qubits.  Note that for even numbers of qubits the mirroring is about a vertical axis (\cref{fig:mirror_id}a), while for odd numbers of qubits the mirroring is about a horizontal axis (\cref{fig:mirror_id}b).  We emphasize that the matchgate parameters are different on either side of the equality signs.

\begin{figure}
	\centering
	\includegraphics[scale=0.5]{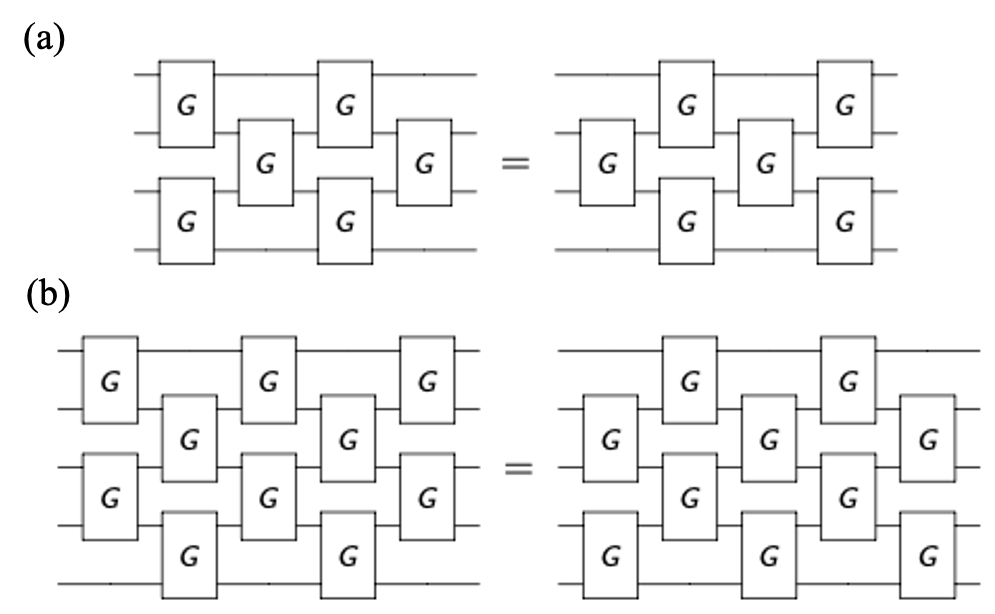}
	\caption{Conjectured $\Hcd$ matchgate mirroring identities for four qubits (a) and for five qubits (b).}
	\label{fig:mirror_id}
\end{figure}

To understand how these mirroring identities allow for the construction of constant depth circuits, we notice that for an $N$-qubit naive circuit (e.g., \cref{eq:timestep_n}) we can apply the matchgate mirroring identity to the last $N$ columns of matchgates in the circuit.  We emphasize that applying this identity will change the parameters defining each matchgate within the mirroring group.  Application of this identity will result in pairs of adjacent matchgates on the same qubit pairs that can be combined into one matchgate, thus reducing the number of columns of matchgates in the circuit by one.  This can be repeated until only $N$ columns of matchgates in the circuit remain. This process is demonstrated for six qubits in \cref{fig:six_qubit}.  \cref{fig:six_qubit}a shows the naive circuit for six qubits simulating $n$ time-steps with the last six columns of matchgates in the circuit highlighted with an outline.  \cref{fig:six_qubit}b shows one application of the matchgate mirroring identity for six qubits to these last six columns of matchgates.  Note how after applying the identity, two pairs of matchgates emerge adjacent to one another on the same pair of qubits, highlighted with an outline.  These pairs can each be merged into one matchgate with new parameters, thus reducing the number of columns of matchgates in the circuit by one.  This process is repeated until only six columns of matchgates remain, as shown in \cref{fig:six_qubit}c.

\begin{figure}
	\centering
	\includegraphics[scale=0.5]{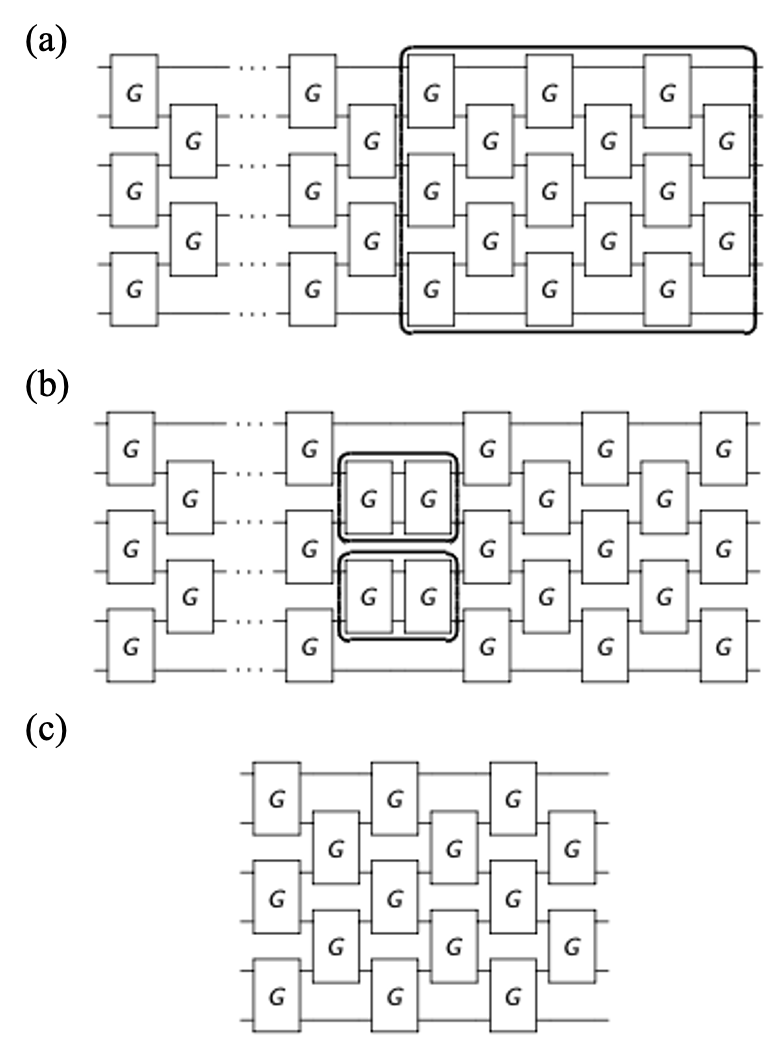}
	\caption{Downfolding a 6-qubit circuit for $n$ time-steps down to a constant-depth circuit. (a) The 6-qubit circuit for evolving the system by $n$ time-steps with the time-evolution operator $U(n\Delta t)$.  A box highlights the last six columns of matchgates to which the matchgate mirroring identity will be applied.  (b) The 6-qubit circuit after application of the $\Hcd$ matchgate mirroring identity.  Pairs of adjacent matchgates on the same qubit pairs which can be combined into one matchgate with new parameters are highlighted with an outline. (c) The final constant-depth circuit for a 6-qubit circuit, which has six columns of matchgates.}
	\label{fig:six_qubit}
\end{figure}

The downfolding approach presented in \cref{fig:six_qubit} shows how to methodically obtain constant-depth circuits for each time-step in the dynamic simulation. In practice, however, we directly use numerical optimization to find the parameters for the constant-depth circuit of \cref{fig:six_qubit}c. We begin by computing the operator in \cref{unitary} (either by exact diagonalization or Trotter decomposition), which defines our target matrix (i.e., the matrix our circuit aims to carry out).  Given a system size, we then construct the constant-depth circuit structure, which has $N$ columns of matchgates for an $N$-qubit system.  Next, we compute the matrix equivalent of the circuit, which will be compared to our target matrix.  Using numerical optimization, we then solve for the parameters of the circuit that minimize the distance between the circuit matrix and the target matrix.

The number of circuit parameters grows quadratically with system size.  This makes scaling to larger system sizes challenging as the circuit optimization for each time-step will take longer to compute.  This could be ameliorated by finding a way to map the coefficients of the Hamiltonian directly to the rotation angles in the constant-depth circuit, whether through analytical techniques or machine learning methods.  This would enable one to skip computation of the time-evolution operator and numerical optimization altogether.  It should be noted, however, that the inability to remove this classical optimization step may not completely inhibit this method because the constant-depth circuit generation is embarrassingly parallel.  In other words, the circuits for each time-step may all be computed in parallel, as numerical optimization for one circuit does not depend on information from any other circuit.  In this way, the numerical optimization of circuits for all time-steps for large system simulations could be executed simultaneously on a classical supercomputer, which are regularly used for similar computations.

The volume of the constant-depth circuits grows quadratically with system size $N$, while the depth grows only linearly with $N$.  We emphasize, however, that unlike previous circuit generation techniques, our circuits do not grow in size with increasing numbers of time-steps, but rather remain fixed for a given system size $N$.  This remarkable feature is what enables simulation out to arbitrarily large numbers of time-steps and thus permits long-time dynamic simulations.  Most other methods for circuit generation will produce circuits that grow linearly with increasing numbers of time-steps \cite{wiebe11, smith2019simulating}.  This prohibits dynamic simulations beyond a certain number of time-steps due to the quantum computer encountering circuits that are too large, and thus accumulate too much error due to gate errors and qubit decoherence.

\section{Results}
\label{sec_results}

\begin{figure*}
	\centering
	\includegraphics[scale=0.95]{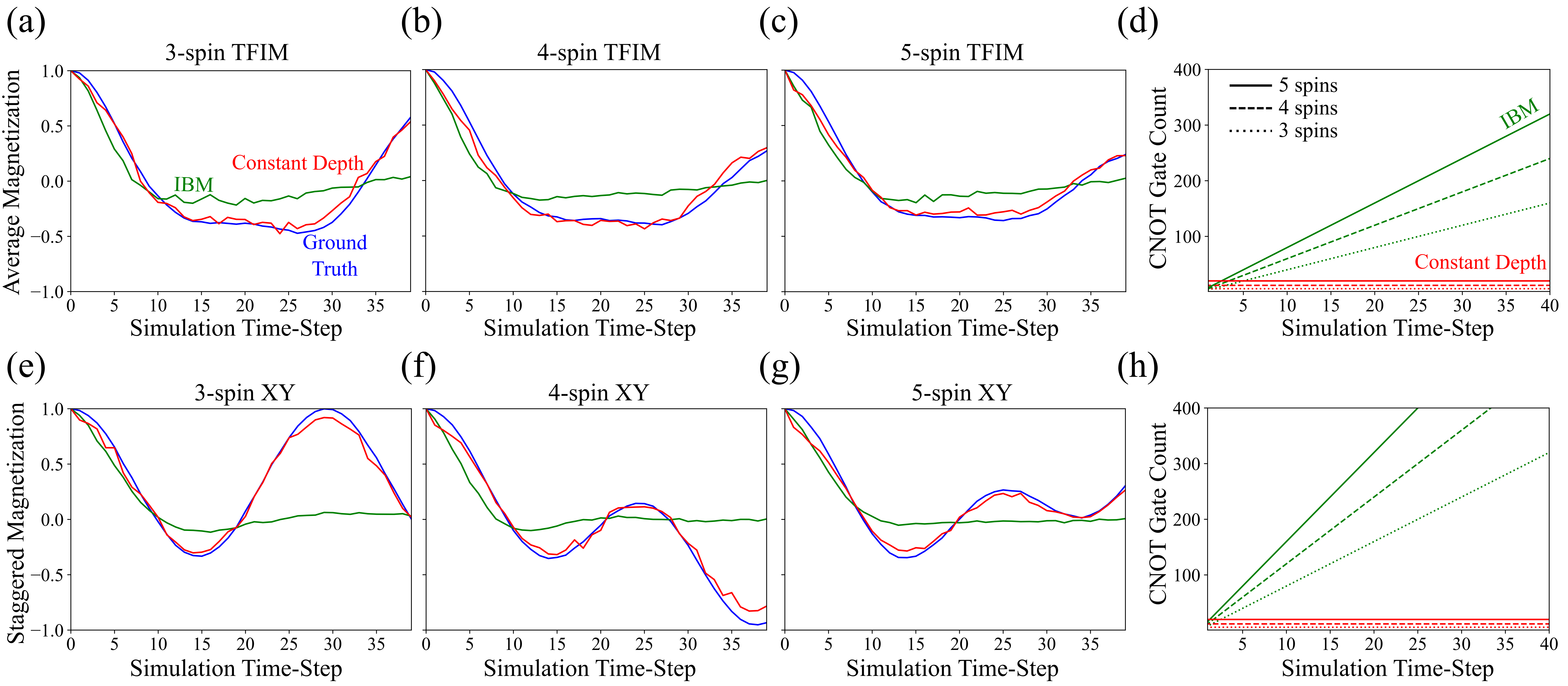}
	\caption{Comparison of simulation results and $\cnot$ gate count for the TFIM and XY model using the constant-depth circuits versus the IBM-compiled circuits.  The top row shows simulation results for a TFIM with 3- (a), 4- (b), and 5-qubits (c).  A noise-free simulator provides the ground truth, shown in blue, with which to compare results from the constant-depth circuits (red) and IBM-compiled circuits (green).  The bottom row shows the analogous simulation results for the XY model with 3- (e), 4- (f), and 5-qubits (g).  (d) and (h) show the number of $\cnot$ gates in the constant-depth (red) and IBM-compiled (green) circuits for each time-step for 3- (dotted line), 4- (dashed line), and 5-qubit (solid line) systems for the TFIM and XY models, respectively.}
	\label{hardware_sim}
\end{figure*}
To demonstrate the power of our constant-depth circuits, we simulate quantum quenches of 3-, 4-, and 5-spin systems defined by the TFIM and XY model on the IBM quantum processor ``ibmq\_athens".  A quantum quench is simulated by initializing the system in the ground state of an initial Hamiltonian, $H_i$, and then evolving the system through time under a final Hamiltonian, $H_f$.  Quenches can simulate a sudden change in a system's environment and provide insights into the non-equilibrium dynamics of various quantum materials.  

To obtain the TFIM, we set $J_y = J_z = 0$ and $\beta = z$ in the Hamiltonian in \cref{Hamiltonian}.  To perform the quench, we assume the external magnetic field is initially turned off, and the qubits are initialized in the ground state of an initial Hamiltonian $H_i(t < 0) = \sum_i -J_x\ \sigma_{i}^{x}\sigma_{i+1}^{x}$, which is a ferromagnetic state oriented along the $x$-axis.  At time $t=0$, a time-dependent magnetic field is instantaneously turned on, and the system evolves under the final Hamiltonian  $H_i(t \ge 0) = -\sum_i \{J_x\sigma_{i}^{x}\sigma_{i+1}^{x} + h_z(t)\sigma_{i}^{z}\}$.  We use parameters from Ref \cite{bassman2020towards}, setting $J_x = 11.83898\,\si{\milli\electronvolt}$ and $h_z(t) = 2J_x \cos(\omega t)$ with $\omega = 0.0048\,\si{\per\femto\second}$, which simulates a simplified model of a $\ch{Re}$-doped mono-layer of $\ch{MoSe2}$ under laser excitation.  A time-step of $3\,\si{\femto\second}$ is used in the simulations. Our observable of interest is the average magnetization of the system along the $x$-axis, given by $m_x(t) = \frac{1}{N}\sum_i \langle\sigma_{i}^{x}(t)\rangle$.

To obtain the XY model, we set $J_z = h_{\beta} =0$ in the Hamiltonian in \cref{Hamiltonian}.  To perform the quench, we assume $J_x = J_y = -1.0\,\si{\electronvolt}$ and let $J_z \rightarrow \infty$, resulting in an initial Hamiltonian $H_i(t<0) = C \sum\sigma_{i}^{z}\sigma_{i+1}^{z}$, where $C$ is an arbitrarily large constant.  The ground state of this Hamiltonian is the N\'eel state, defined as $\ket{\psi_0} = \ket{\uparrow \downarrow \uparrow \cdots \downarrow}$.  At time $t=0$, we instantaneously set $J_z = 0$, giving a final Hamiltonian of $H_f(t \geq 0) = \sum_i\{\sigma_{i}^{x}\sigma_{i+1}^{x} + \sigma_{i}^{y}\sigma_{i+1}^{y}\}$.  A time-step of $0.025\,\si{\femto\second}$ is used in the simulations.  Our observable of interest is the staggered magnetization of the system, which is related to the antiferromagnetic order parameter and given by $m_s(t) = \frac{1}{N}\sum_i (-1)^i \langle\sigma_{i}^{z}(t)\rangle$.

To generate the constant-depth circuits for our simulations, we rely on circuit optimization software provided by the circuit synthesis toolkit BQSKit \cite{BQSKit}.  This suite of software provides several packages which can be used to generate the constant-depth circuits.  The user must provide the matrix representation for the time-evolution operator to be implemented along with the parameter-dependent constant-depth circuit structure.  The circuit synthesis software then proceeds to use numerical optimization to find the optimal parameters for the circuit.  Tutorials including the full code for generating our constant-depth circuits using the BQSKit toolkit are included in the Supplemental Material \cite{github}.

\cref{hardware_sim} shows the simulation results for quenches of the TFIM (top row) and XY model (bottom row) for various system sizes.  In the first three columns, the results from our constant-depth circuits (red) and IBM-compiled circuits (green) are compared to the expected results computed with a noise-free quantum computer simulator (blue).  The $x$-axis gives the number of simulation time-steps, while the $y$-axis gives the time-dependent observable.  Circuits from the IBM compiler, as well as other state-of-the-art general purpose compilers, grow in depth with increasing numbers of time-steps.  For this reason, the IBM-compiled circuits produce qualitatively consistent results for the first few time-steps, but thereafter the circuits are too large, accumulating too much error, to produce reliable results.  A recent benchmark study of dynamic simulations of similar systems on quantum computers found analogous behavior, with high-fidelity results limited to only a handful of time-steps \cite{smith2019simulating}.  In contrast, the results from our constant-depth circuits remain accurate for all time-step counts, and in principle, will remain so out to arbitrarily many time-steps.  These results thus show the power of constant-depth circuits to enable long-time dynamic simulations.

\cref{hardware_sim}d and \ref{hardware_sim}h compare the number of $\cnot$ gates for each time-step in the constant-depth (red) and IBM-compiled (green) circuits for 3- (dotted line), 4- (dashed lined), and 5-spin (solid line) systems.  Clearly the number of $\cnot$ gates remains the same for all time-steps for our constant-depth circuits, but the number grows linearly with increasing numbers of time-steps for the IBM-compiled circuits.  Notice how the number of $\cnot$ gates per time-step for the XY model circuits (\ref{hardware_sim}h) are approximately double the number for the TFIM circuits (\ref{hardware_sim}d), while our constant-depth circuits have the same $\cnot$ count for both models.  

\section{Conclusions}
We have presented a method for generating circuits that are constant in depth with increasing numbers of time-steps for dynamic simulations of quantum materials.  Standard Hamiltonian simulation algorithms produce circuits that grow in depth with increasing time-step count, which limits the number of time-steps that are feasible to simulate on near-term devices. The constant-depth nature of our circuits removes this limit, thereby allowing for simulations to be broken into arbitrarily many time-steps.  This allows for Trotter error to be made negligible by making the time-step arbitrarily small.  Furthermore, the constant-depth circuits enable long-time dynamics to be simulated since simulations can be executed out to arbitrarily long times.  

The constant-depth circuits have a regular structure, which allows for simple construction of the circuits for each system size; for a system of $N$-spins the circuit structure contains only $N(N-1)$ $\cnot$ gates.  These constant-depth circuits are currently applicable to one-dimensional spin models with nearest-neighbor interactions along two or fewer axes, which we denote by $\Hcd$.  They are comprised of two-qubit matchgates acting on nearest neighbor pairs of qubits.  We find that the matchgates for the $\Hcd$ Hamiltonians are special in that they can be decomposed into native-gate circuits requiring only two $\cnot$ gates, as opposed to generic matchgates that require three. The ability to downfold circuits for dynamic simulation under $\Hcd$ Hamiltonians to constant-depth relies on a set of conjectured identities for these special matchgates that we introduce in this paper.  We demonstrate the power of the constant-depth circuits with dynamic simulations of the TFIM and XY models with 3-, 4-, and 5-spins carried out on IBM's quantum processors.  Our results illustrate how the constant-depth circuits enable successful long-time dynamic simulations of quantum materials.  

There are numerous directions for future investigations to explore whether constant-depth circuits can be created for various extensions including two-dimensional models, models with next-nearest neighbor or even longer-range interactions, or full Heisenberg interactions (i.e., interactions along three axes).  Indeed, matchgates have previously been studied for various two-dimensional qubit topologies and for longer-range interactions \cite{brod2012geometries, brod2013computational}.  Paired with incremental improvements in quantum hardware, the ability to extend our constant-depth circuits to more complex systems would pave the way to new discoveries in the behavior of quantum materials by enabling long-time dynamic simulations of systems relevant to scientific and technological problems.

\section*{Acknowledgements}
This work was supported by the U.S. Department of Energy (DOE) under Contract No. DE-AC02-05CH11231, through the Office of Advanced Scientific Computing Research Accelerated Research for Quantum Computing Program (LB, RVB, EY, CI, and WAdJ) and the Advanced Quantum Testbed (ES).

\onecolumngrid
\newpage
\appendix

\section{Quantum circuits for $\boldsymbol\Hcd$ matchgates}
\label{app:matchgates}

\subsection*{Hamiltonian parameter subsets with:
          $\boldsymbol{J_x J_y = J_x J_z = J_y J_z = 0}$ and
          $\boldsymbol{h_\beta = 0}$}

\subsubsection*{Matchgates used in constant-depth circuits for simulating $H = J_x \sum_i \sigma_i^x \sigma_{i+1}^x $}

\vspace{-20pt}
\begin{flalign}
G &:=
{\small\begin{bmatrix}
       \cos\frac{\theta_1}{2} & & &
    -\I\sin\frac{\theta_1}{2} \\
  &    \cos\frac{\theta_1}{2} &
    -\I\sin\frac{\theta_1}{2} \\
  & -\I\sin\frac{\theta_1}{2} &
       \cos\frac{\theta_1}{2} \\
    -\I\sin\frac{\theta_1}{2} & & &
       \cos\frac{\theta_1}{2} \\
\end{bmatrix}} &
{\small\begin{myqcircuit}
  & \ctrl1 & \Rxx1 & \ctrl1 & \qw \\
  & \targ  & \qw   & \targ  & \qw
\end{myqcircuit}} \qquad\qquad\qquad\ \ \
\end{flalign}

\subsubsection*{Matchgates used in constant-depth circuits for simulating $H = J_y \sum_i \sigma_i^y \sigma_{i+1}^y $}

\vspace{-20pt}
\begin{flalign}
G &:=
{\small\begin{bmatrix}
       \cos\frac{\theta_1}{2} & & &
     \I\sin\frac{\theta_1}{2} \\
  &    \cos\frac{\theta_1}{2} &
    -\I\sin\frac{\theta_1}{2} \\
  & -\I\sin\frac{\theta_1}{2} &
       \cos\frac{\theta_1}{2} \\
     \I\sin\frac{\theta_1}{2} & & &
       \cos\frac{\theta_1}{2} \\
\end{bmatrix}} &
{\small\begin{myqcircuit}
  & \Rzp & \ctrl1 & \Rxx1 & \ctrl1 & \Rzm & \qw \\
  & \Rzp & \targ  & \qw   & \targ  & \Rzm & \qw
\end{myqcircuit}} \qquad
\end{flalign}

\subsubsection*{Matchgates used in constant-depth circuits for simulating $H = J_z \sum_i \sigma_i^z \sigma_{i+1}^z $}

\vspace{-20pt}
\begin{flalign}
G &:=
{\small\begin{bmatrix}
      e^{-\frac{\I\theta_2}{2}} \\
  &   e^{ \frac{\I\theta_2}{2}} \\
  &&  e^{ \frac{\I\theta_2}{2}} \\
  &&& e^{-\frac{\I\theta_2}{2}} \\
\end{bmatrix}} &
{\small\begin{myqcircuit}
  & \ctrl1 & \qw   & \ctrl1 & \qw \\
  & \targ  & \Rzz2 & \targ  & \qw
\end{myqcircuit}} \qquad\qquad\qquad\ \ \
\end{flalign}
~

\subsection*{Hamiltonian parameter subsets with:
          $\boldsymbol{J_x J_y = J_x J_z = J_y J_z = 0}$ and
          $\boldsymbol{h_\beta \neq 0}$}

\subsubsection*{$G$ gates used in constant-depth circuits for simulating $H = J_x \sum_i \sigma_i^x \sigma_{i+1}^x + h_x\sum_i \sigma_i^x $}

\vspace{-20pt}
\begin{flalign}
G &:=
\begin{bmatrix}
  \bA & \bC & \bC & \bB \\
  \bC & \bA & \bB & \bC \\
  \bC & \bB & \bA & \bC \\
  \bB & \bC & \bC & \bA
\end{bmatrix} \quad
\begin{aligned}
  \bA &= {\small
            \left[\cos\tfrac{\theta_0}{2}\right]^2 \cos\tfrac{\theta_1}{2} +
         \I \left[\sin\tfrac{\theta_0}{2}\right]^2 \sin\tfrac{\theta_1}{2}} \\
  \bB &= {\small
           -\left[\sin\tfrac{\theta_0}{2}\right]^2 \cos\tfrac{\theta_1}{2} -
         \I \left[\cos\tfrac{\theta_0}{2}\right]^2 \sin\tfrac{\theta_1}{2}} \\
  \bC &= {\small -\tfrac12\I e^{-\frac{\I\theta_1}{2}}\sin\theta_0}
\end{aligned} &
{\small\begin{myqcircuit}
  & \Rxx0 & \ctrl1 & \Rxx1 & \ctrl1 & \qw \\
  & \Rxx0 & \targ  & \qw   & \targ  & \qw
  \gategroup{1}{2}{2}{2}{1.5ex}{--}
\end{myqcircuit}} \qquad\qquad\qquad\ \ \
\end{flalign}

\subsubsection*{$G$ gates used in constant-depth circuits for simulating $H = J_y \sum_i \sigma_i^y \sigma_{i+1}^y + h_y\sum_i \sigma_i^y $}

\vspace{-20pt}
\begin{flalign}
G &:=
\begin{bmatrix}
   \bA &  \bD &  \bD & -\bB \\
  -\bD &  \bA &  \bB &  \bD \\
  -\bD &  \bB &  \bA &  \bD \\
  -\bB & -\bD & -\bD &  \bA
\end{bmatrix} \quad \bD = {\small -\I\bC} &
{\small\begin{myqcircuit}
  & \Ryy0 & \Rzp & \ctrl1 & \Rxx1 & \ctrl1 & \Rzm & \qw \\
  & \Ryy0 & \Rzp & \targ  & \qw   & \targ  & \Rzm & \qw
  \gategroup{1}{2}{2}{2}{1.5ex}{--}
\end{myqcircuit}} \qquad
\end{flalign}

\subsubsection*{Matchgates used in constant-depth circuits for simulating $H = J_z \sum_i \sigma_i^z \sigma_{i+1}^z + h_z\sum_i \sigma_i^z $}

\vspace{-20pt}
\begin{flalign}
G &:=
{\small\begin{bmatrix}
      e^{-\frac{\I\theta_2}{2} - \I\theta_0} \\
  &   e^{ \frac{\I\theta_2}{2}} \\
  &&  e^{ \frac{\I\theta_2}{2}} \\
  &&& e^{-\frac{\I\theta_2}{2} + \I\theta_0} \\
\end{bmatrix}} &
{\small\begin{myqcircuit}
  & \Rzz0 & \ctrl1 & \qw   & \ctrl1 & \qw \\
  & \Rzz0 & \targ  & \Rzz2 & \targ  & \qw
  \gategroup{1}{2}{2}{2}{1.5ex}{--}
\end{myqcircuit}} \qquad\qquad\qquad\ \ \
\end{flalign}

\subsection*{Hamiltonian parameter subsets with: $\boldsymbol{J_x J_y J_z = 0}$ and
                                     $\boldsymbol{h_\beta = 0}$}

\subsubsection*{Matchgates used in constant-depth circuits for simulating $H = \sum_i J_x\sigma_i^x \sigma_{i+1}^x + J_y\sigma_i^y \sigma_{i+1}^y $}

\vspace{-20pt}
\begin{flalign}
G &:=
{\small\begin{bmatrix}
       \cos\frac{\theta_1 - \theta_2}{2} & & &
    -\I\sin\frac{\theta_1 - \theta_2}{2} \\
  &    \cos\frac{\theta_1 + \theta_2}{2} &
    -\I\sin\frac{\theta_1 + \theta_2}{2} \\
  & -\I\sin\frac{\theta_1 + \theta_2}{2} &
       \cos\frac{\theta_1 + \theta_2}{2} \\
    -\I\sin\frac{\theta_1 - \theta_2}{2} & & &
       \cos\frac{\theta_1 - \theta_2}{2} \\
\end{bmatrix}} &
{\small\begin{myqcircuit}
  & \Rxp & \ctrl1 & \Rxx1 & \ctrl1 & \Rxm & \qw \\
  & \Rxp & \targ  & \Rzz2 & \targ  & \Rxm & \qw
  \gategroup{1}{2}{2}{2}{1.5ex}{.}
  \gategroup{1}{4}{1}{4}{1.5ex}{.}
  \gategroup{1}{6}{2}{6}{1.5ex}{.}
\end{myqcircuit}} \qquad
\end{flalign}

\subsubsection*{Matchgates used in constant-depth circuits for simulating $H = \sum_i J_x\sigma_i^x \sigma_{i+1}^x + J_z\sigma_i^z \sigma_{i+1}^z $}

\vspace{-20pt}
\begin{flalign}
G &:=
{\small\begin{bmatrix}
        e^{-\frac{\I\theta_2}{2}}\cos\frac{\theta_1}{2} & & &
    -\I e^{-\frac{\I\theta_2}{2}}\sin\frac{\theta_1}{2} \\
  &     e^{ \frac{\I\theta_2}{2}}\cos\frac{\theta_1}{2} &
    -\I e^{ \frac{\I\theta_2}{2}}\sin\frac{\theta_1}{2} \\
  & -\I e^{ \frac{\I\theta_2}{2}}\sin\frac{\theta_1}{2} &
        e^{ \frac{\I\theta_2}{2}}\cos\frac{\theta_1}{2} \\
    -\I e^{-\frac{\I\theta_2}{2}}\sin\frac{\theta_1}{2} & & &
        e^{-\frac{\I\theta_2}{2}}\cos\frac{\theta_1}{2} \\
\end{bmatrix}} &
{\small\begin{myqcircuit}
  & \ctrl1 & \Rxx1 & \ctrl1 & \qw \\
  & \targ  & \Rzz2 & \targ  & \qw
  \gategroup{2}{3}{2}{3}{1.5ex}{.}
\end{myqcircuit}} \qquad\qquad\qquad\quad
\end{flalign}

\subsubsection*{Matchgates used in constant-depth circuits for simulating $H = \sum_i J_y\sigma_i^y \sigma_{i+1}^y + J_z\sigma_i^z \sigma_{i+1}^z $}

\vspace{-20pt}
\begin{flalign}
G &:=
{\small\begin{bmatrix}
        e^{-\frac{\I\theta_2}{2}}\cos\frac{\theta_1}{2} & & &
     \I e^{-\frac{\I\theta_2}{2}}\sin\frac{\theta_1}{2} \\
  &     e^{ \frac{\I\theta_2}{2}}\cos\frac{\theta_1}{2} &
    -\I e^{ \frac{\I\theta_2}{2}}\sin\frac{\theta_1}{2} \\
  & -\I e^{ \frac{\I\theta_2}{2}}\sin\frac{\theta_1}{2} &
        e^{ \frac{\I\theta_2}{2}}\cos\frac{\theta_1}{2} \\
     \I e^{-\frac{\I\theta_2}{2}}\sin\frac{\theta_1}{2} & & &
        e^{-\frac{\I\theta_2}{2}}\cos\frac{\theta_1}{2} \\
\end{bmatrix}} &
{\small\begin{myqcircuit}
  & \Rzp & \ctrl1 & \Rxx1 & \ctrl1 & \Rzm & \qw \\
  & \Rzp & \targ  & \Rzz2 & \targ  & \Rzm & \qw
  \gategroup{2}{4}{2}{4}{1.5ex}{.}
\end{myqcircuit}} \qquad
\end{flalign}

\subsection*{Hamiltonian parameter subsets with: $\boldsymbol{J_x J_y J_z = 0}$ and
                                     $\boldsymbol{h_\beta \neq 0}$}

\subsubsection*{Matchgates used in constant-depth circuits for simulating $H = J_x\sum_i \sigma_i^x \sigma_{i+1}^x + h_z\sum_i \sigma_i^z$ or $H = J_y\sum_i \sigma_i^y \sigma_{i+1}^y + h_z\sum_i \sigma_i^z$ or $H = \sum_i \{J_x\sigma_i^x \sigma_{i+1}^x + J_y\sigma_i^y \sigma_{i+1}^y\} + h_z\sum_i \sigma_i^z$}

\vspace{-20pt}
\begin{subequations}
\begin{flalign}
G &:=
{\small\begin{bmatrix}
        e^{-\I(\theta_0 + \theta_3)}\cos\frac{\theta_1 - \theta_2}{2} & & &
    -\I e^{ \I(\theta_0 - \theta_3)}\sin\frac{\theta_1 - \theta_2}{2} \\
  &    \cos\frac{\theta_1 + \theta_2}{2} &
    -\I\sin\frac{\theta_1 + \theta_2}{2} \\
  & -\I\sin\frac{\theta_1 + \theta_2}{2} &
       \cos\frac{\theta_1 + \theta_2}{2} \\
    -\I e^{-\I(\theta_0 - \theta_3)}\sin\frac{\theta_1 - \theta_2}{2} & & &
        e^{ \I(\theta_0 + \theta_3)}\cos\frac{\theta_1 - \theta_2}{2} \\
\end{bmatrix}} &
\end{flalign}
\vspace{-7.5pt}
\begin{flalign} &&
{\small\begin{myqcircuit}
  & \Rzz0 & \Rxp & \ctrl1 & \Rxx1 & \ctrl1 & \Rxm & \Rzz3 & \qw \\
  & \Rzz0 & \Rxp & \targ  & \Rzz2 & \targ  & \Rxm & \Rzz3 & \qw
  \gategroup{1}{2}{2}{2}{1.5ex}{--}
  \gategroup{1}{8}{2}{8}{1.5ex}{--}
\end{myqcircuit}} \quad\ \,
\end{flalign}
\end{subequations}

\subsubsection*{$G$ gates used in constant-depth circuits for simulating $H = J_x\sum_i \sigma_i^x \sigma_{i+1}^x + h_y\sum_i \sigma_i^y$ or $H = J_z\sum_i \sigma_i^z \sigma_{i+1}^z + h_y\sum_i \sigma_i^y$ or $H = \sum_i \{J_x\sigma_i^x \sigma_{i+1}^x + J_z\sigma_i^z \sigma_{i+1}^z\} + h_y\sum_i \sigma_i^y$}

\vspace{-20pt}
\begin{subequations}
\begin{flalign}
G &:=
\begin{bmatrix}
        \bE &       \bG &       \bG &      \bF \\
  -\conj\bG &  \conj\bE & -\conj\bF & \conj\bG \\
  -\conj\bG & -\conj\bF &  \conj\bE & \conj\bG \\
        \bF &      -\bG &      -\bG &      \bE
\end{bmatrix} &
{\small\begin{myqcircuit}
  & \Ryy0 & \ctrl1 & \Rxx1 & \ctrl1 & \Ryy3 & \qw \\
  & \Ryy0 & \targ  & \Rzz2 & \targ  & \Ryy3 & \qw
  \gategroup{1}{2}{2}{2}{1.5ex}{--}
  \gategroup{1}{6}{2}{6}{1.5ex}{--}
\end{myqcircuit}} \qquad\qquad\qquad\
\end{flalign}
\vspace{-7.5pt}
\begin{align}
\bE &= {\small
      \tfrac12\left[\cos\tfrac{\theta_1 + \theta_2}{2} +
                    \cos\tfrac{\theta_1 - \theta_2}{2}\cos(\theta_0 + \theta_3)
             \right] -
    \tfrac12\I\left[\sin\tfrac{\theta_1 + \theta_2}{2} -
                    \sin\tfrac{\theta_1 - \theta_2}{2}\cos(\theta_0 - \theta_3)
             \right]} \nonumber\\
\bF &= {\small
      \tfrac12\left[\cos\tfrac{\theta_1 + \theta_2}{2} -
                    \cos\tfrac{\theta_1 - \theta_2}{2}\cos(\theta_0 + \theta_3)
             \right] -
    \tfrac12\I\left[\sin\tfrac{\theta_1 + \theta_2}{2} +
                    \sin\tfrac{\theta_1 - \theta_2}{2}\cos(\theta_0 - \theta_3)
             \right]} \\
\bG &= {\small
        -\tfrac12  \cos\tfrac{\theta_1 - \theta_2}{2}\sin(\theta_0 + \theta_3)
        -\tfrac12\I\sin\tfrac{\theta_1 - \theta_2}{2}\sin(\theta_0 - \theta_3)}
  \nonumber
\end{align}
\end{subequations}

\subsubsection*{$G$ gates used in constant-depth circuits for simulating $H = J_y\sum_i \sigma_i^y \sigma_{i+1}^y + h_x\sum_i \sigma_i^x$ or $H = J_z\sum_i \sigma_i^z \sigma_{i+1}^z + h_x\sum_i \sigma_i^x$ or $H = \sum_i \{J_y\sigma_i^y \sigma_{i+1}^y + J_z\sigma_i^z \sigma_{i+1}^z\} + h_x\sum_i \sigma_i^x$}

\vspace{-20pt}
\begin{flalign}
G &:=
\begin{bmatrix}
        \bE &       \bH &       \bH &      -\bF \\
  -\conj\bH &  \conj\bE & -\conj\bF & -\conj\bH \\
  -\conj\bH & -\conj\bF &  \conj\bE & -\conj\bH \\
       -\bF &       \bH &       \bH &       \bE
\end{bmatrix} \quad \bH = {\small \I\bG} &
{\small\begin{myqcircuit}
  & \Rxx0 & \Rzp & \ctrl1 & \Rxx1 & \ctrl1 & \Rzm & \Rxx3 & \qw \\
  & \Rxx0 & \Rzp & \targ  & \Rzz2 & \targ  & \Rzm & \Rxx3 & \qw
  \gategroup{1}{2}{2}{2}{1.5ex}{--}
  \gategroup{1}{8}{2}{8}{1.5ex}{--}
\end{myqcircuit}} \qquad
\end{flalign}

\section{Proof of \cref{fig:mirror_id}a}
\label{app:proof}

\begin{proof}
Using \eqref{eq:equiv3} recursively, yields
\begin{align}
{\small\begin{myqcircuit}
  & \MG & \qw & \MG & \qw & \qw \\
  & \mg & \MG & \mg & \MG & \qw \\
  & \MG & \mg & \MG & \mg & \qw \\
  & \mg & \qw & \mg & \qw & \qw
\end{myqcircuit}}
\quad &= \quad
{\small\begin{myqcircuit}
  & \qw & \MG & \qw & \MG & \qw & \qw & \qw \\
  & \qw & \mg & \MG & \mg & \qw & \MG & \qw \\
  & \MG & \gw & \mg & \gw & \MG & \mg & \qw \\
  & \mg & \qw & \qw & \qw & \mg & \qw & \qw   \gategroup{1}{3}{3}{5}{2ex}{--}
\end{myqcircuit}} \\[10pt]
 &= \quad
{\small\begin{myqcircuit}
  & \qw & \qw & \MG & \qw & \qw & \qw & \qw \\
  & \qw & \MG & \mg & \MG & \qw & \MG & \qw \\
  & \MG & \mg & \qw & \mg & \MG & \mg & \qw \\
  & \mg & \qw & \qw & \gw & \mg & \gw & \qw   \gategroup{2}{5}{4}{7}{2ex}{--}
\end{myqcircuit}} \\[10pt]
 &= \quad
{\small\begin{myqcircuit}
  & \qw & \qw & \MG & \qw & \qw & \qw \\
  & \qw & \MG & \mg & \MG & \qw & \qw \\
  & \MG & \mg & \MG & \mg & \MG & \qw \\
  & \mg & \qw & \mg & \qw & \mg & \qw
\end{myqcircuit}} \\[10pt]
 &= \quad
{\small\begin{myqcircuit}
  & \qw & \qw & \qw & \MG & \qw & \qw & \qw \\
  & \gw & \MG & \gw & \mg & \MG & \qw & \qw \\
  & \MG & \mg & \MG & \qw & \mg & \MG & \qw \\
  & \mg & \qw & \mg & \qw & \qw & \mg & \qw   \gategroup{2}{2}{4}{4}{2ex}{--}
\end{myqcircuit}} \\[10pt]
 &= \quad
{\small\begin{myqcircuit}
  & \qw & \qw & \gw & \MG & \gw & \qw & \qw \\
  & \MG & \qw & \MG & \mg & \MG & \qw & \qw \\
  & \mg & \MG & \mg & \qw & \mg & \MG & \qw \\
  & \qw & \mg & \qw & \qw & \qw & \mg & \qw   \gategroup{1}{4}{3}{6}{2ex}{--}
\end{myqcircuit}} \\[10pt]
 &= \quad
{\small\begin{myqcircuit}
  & \qw & \MG & \qw & \MG & \qw \\
  & \MG & \mg & \MG & \mg & \qw \\
  & \mg & \MG & \mg & \MG & \qw \\
  & \qw & \mg & \qw & \mg & \qw
\end{myqcircuit}}
\end{align}
\end{proof}

\twocolumngrid
\bibliography{main.bib}

\begin{thebibliography}{36}%
\makeatletter
\providecommand \@ifxundefined [1]{%
 \@ifx{#1\undefined}
}%
\providecommand \@ifnum [1]{%
 \ifnum #1\expandafter \@firstoftwo
 \else \expandafter \@secondoftwo
 \fi
}%
\providecommand \@ifx [1]{%
 \ifx #1\expandafter \@firstoftwo
 \else \expandafter \@secondoftwo
 \fi
}%
\providecommand \natexlab [1]{#1}%
\providecommand \enquote  [1]{``#1''}%
\providecommand \bibnamefont  [1]{#1}%
\providecommand \bibfnamefont [1]{#1}%
\providecommand \citenamefont [1]{#1}%
\providecommand \href@noop [0]{\@secondoftwo}%
\providecommand \href [0]{\begingroup \@sanitize@url \@href}%
\providecommand \@href[1]{\@@startlink{#1}\@@href}%
\providecommand \@@href[1]{\endgroup#1\@@endlink}%
\providecommand \@sanitize@url [0]{\catcode `\\12\catcode `\$12\catcode
  `\&12\catcode `\#12\catcode `\^12\catcode `\_12\catcode `\%12\relax}%
\providecommand \@@startlink[1]{}%
\providecommand \@@endlink[0]{}%
\providecommand \url  [0]{\begingroup\@sanitize@url \@url }%
\providecommand \@url [1]{\endgroup\@href {#1}{\urlprefix }}%
\providecommand \urlprefix  [0]{URL }%
\providecommand \Eprint [0]{\href }%
\providecommand \doibase [0]{http://dx.doi.org/}%
\providecommand \selectlanguage [0]{\@gobble}%
\providecommand \bibinfo  [0]{\@secondoftwo}%
\providecommand \bibfield  [0]{\@secondoftwo}%
\providecommand \translation [1]{[#1]}%
\providecommand \BibitemOpen [0]{}%
\providecommand \bibitemStop [0]{}%
\providecommand \bibitemNoStop [0]{.\EOS\space}%
\providecommand \EOS [0]{\spacefactor3000\relax}%
\providecommand \BibitemShut  [1]{\csname bibitem#1\endcsname}%
\let\auto@bib@innerbib\@empty
\bibitem [{\citenamefont {Arute}\ \emph {et~al.}(2019)\citenamefont {Arute},
  \citenamefont {Arya}, \citenamefont {Babbush}, \citenamefont {Bacon},
  \citenamefont {Bardin}, \citenamefont {Barends}, \citenamefont {Biswas},
  \citenamefont {Boixo}, \citenamefont {Brandao}, \citenamefont {Buell} \emph
  {et~al.}}]{arute2019quantum}%
  \BibitemOpen
  \bibfield  {author} {\bibinfo {author} {\bibfnamefont {Frank}\ \bibnamefont
  {Arute}}, \bibinfo {author} {\bibfnamefont {Kunal}\ \bibnamefont {Arya}},
  \bibinfo {author} {\bibfnamefont {Ryan}\ \bibnamefont {Babbush}}, \bibinfo
  {author} {\bibfnamefont {Dave}\ \bibnamefont {Bacon}}, \bibinfo {author}
  {\bibfnamefont {Joseph~C}\ \bibnamefont {Bardin}}, \bibinfo {author}
  {\bibfnamefont {Rami}\ \bibnamefont {Barends}}, \bibinfo {author}
  {\bibfnamefont {Rupak}\ \bibnamefont {Biswas}}, \bibinfo {author}
  {\bibfnamefont {Sergio}\ \bibnamefont {Boixo}}, \bibinfo {author}
  {\bibfnamefont {Fernando~GSL}\ \bibnamefont {Brandao}}, \bibinfo {author}
  {\bibfnamefont {David~A}\ \bibnamefont {Buell}},  \emph {et~al.},\ }\bibfield
   {title} {\enquote {\bibinfo {title} {{Quantum supremacy using a programmable
  superconducting processor}},}\ }\href {\doibase 10.1038/s41586-019-1666-5}
  {\bibfield  {journal} {\bibinfo  {journal} {Nature}\ }\textbf {\bibinfo
  {volume} {574}},\ \bibinfo {pages} {505--510} (\bibinfo {year}
  {2019})}\BibitemShut {NoStop}%
\bibitem [{\citenamefont {Preskill}(2018)}]{preskill2018quantum}%
  \BibitemOpen
  \bibfield  {author} {\bibinfo {author} {\bibfnamefont {John}\ \bibnamefont
  {Preskill}},\ }\bibfield  {title} {\enquote {\bibinfo {title} {{Quantum
  {C}omputing in the {NISQ} era and beyond}},}\ }\href {\doibase
  10.22331/q-2018-08-06-79} {\bibfield  {journal} {\bibinfo  {journal}
  {Quantum}\ }\textbf {\bibinfo {volume} {2}},\ \bibinfo {pages} {79} (\bibinfo
  {year} {2018})}\BibitemShut {NoStop}%
\bibitem [{\citenamefont {Shor}(1999)}]{shor1999polynomial}%
  \BibitemOpen
  \bibfield  {author} {\bibinfo {author} {\bibfnamefont {Peter~W}\ \bibnamefont
  {Shor}},\ }\bibfield  {title} {\enquote {\bibinfo {title} {{Polynomial-time
  algorithms for prime factorization and discrete logarithms on a quantum
  computer}},}\ }\href {\doibase 10.1137/S0036144598347011} {\bibfield
  {journal} {\bibinfo  {journal} {SIAM Rev.}\ }\textbf {\bibinfo {volume}
  {41}},\ \bibinfo {pages} {303--332} (\bibinfo {year} {1999})}\BibitemShut
  {NoStop}%
\bibitem [{\citenamefont {Grover}(1996)}]{grover1996fast}%
  \BibitemOpen
  \bibfield  {author} {\bibinfo {author} {\bibfnamefont {Lov~K}\ \bibnamefont
  {Grover}},\ }\bibfield  {title} {\enquote {\bibinfo {title} {{A fast quantum
  mechanical algorithm for database search}},}\ }in\ \href {\doibase
  10.1145/237814.237866} {\emph {\bibinfo {booktitle} {Proceedings 28th Annual
  ACM Symposium on the Theory of Computing}}}\ (\bibinfo  {publisher} {ACM},\
  \bibinfo {year} {1996})\ pp.\ \bibinfo {pages} {212--219}\BibitemShut
  {NoStop}%
\bibitem [{\citenamefont {Feynman}(1982)}]{feynman1982simulating}%
  \BibitemOpen
  \bibfield  {author} {\bibinfo {author} {\bibfnamefont {Richard~P}\
  \bibnamefont {Feynman}},\ }\bibfield  {title} {\enquote {\bibinfo {title}
  {{Simulating physics with computers}},}\ }\href {\doibase 10.1007/BF02650179}
  {\bibfield  {journal} {\bibinfo  {journal} {Int. J. Theor. Phys.}\ }\textbf
  {\bibinfo {volume} {21}},\ \bibinfo {pages} {467--488} (\bibinfo {year}
  {1982})}\BibitemShut {NoStop}%
\bibitem [{\citenamefont {Lloyd}(1996)}]{lloyd1996universal}%
  \BibitemOpen
  \bibfield  {author} {\bibinfo {author} {\bibfnamefont {Seth}\ \bibnamefont
  {Lloyd}},\ }\bibfield  {title} {\enquote {\bibinfo {title} {{Universal
  quantum simulators}},}\ }\href {\doibase 10.1126/science.273.5278.1073}
  {\bibfield  {journal} {\bibinfo  {journal} {Science}\ }\textbf {\bibinfo
  {volume} {273}},\ \bibinfo {pages} {1073--1078} (\bibinfo {year}
  {1996})}\BibitemShut {NoStop}%
\bibitem [{\citenamefont {Abrams}\ and\ \citenamefont
  {Lloyd}(1997)}]{abrams1997simulations}%
  \BibitemOpen
  \bibfield  {author} {\bibinfo {author} {\bibfnamefont {Daniel~S}\
  \bibnamefont {Abrams}}\ and\ \bibinfo {author} {\bibfnamefont {Seth}\
  \bibnamefont {Lloyd}},\ }\bibfield  {title} {\enquote {\bibinfo {title}
  {{Simulation of many-body Fermi systems on a universal quantum computer}},}\
  }\href {\doibase 10.1103/PhysRevLett.79.2586} {\bibfield  {journal} {\bibinfo
   {journal} {Phys. Rev. Lett.}\ }\textbf {\bibinfo {volume} {79}},\ \bibinfo
  {pages} {2586--2589} (\bibinfo {year} {1997})}\BibitemShut {NoStop}%
\bibitem [{\citenamefont {Zalka}(1998)}]{zalka1998simulating}%
  \BibitemOpen
  \bibfield  {author} {\bibinfo {author} {\bibfnamefont {Christof}\
  \bibnamefont {Zalka}},\ }\bibfield  {title} {\enquote {\bibinfo {title}
  {{Simulating quantum systems on a quantum computer}},}\ }\href {\doibase
  10.1098/rspa.1998.0162} {\bibfield  {journal} {\bibinfo  {journal} {Proc. R.
  Soc. A}\ }\textbf {\bibinfo {volume} {454}},\ \bibinfo {pages} {313--322}
  (\bibinfo {year} {1998})}\BibitemShut {NoStop}%
\bibitem [{\citenamefont {Bassman}\ \emph {et~al.}(2021)\citenamefont
  {Bassman}, \citenamefont {Urbanek}, \citenamefont {Metcalf}, \citenamefont
  {Carter}, \citenamefont {Kemper},\ and\ \citenamefont
  {de~Jong}}]{bassman2021simulating}%
  \BibitemOpen
  \bibfield  {author} {\bibinfo {author} {\bibfnamefont {Lindsay}\ \bibnamefont
  {Bassman}}, \bibinfo {author} {\bibfnamefont {Miroslav}\ \bibnamefont
  {Urbanek}}, \bibinfo {author} {\bibfnamefont {Mekena}\ \bibnamefont
  {Metcalf}}, \bibinfo {author} {\bibfnamefont {Jonathan}\ \bibnamefont
  {Carter}}, \bibinfo {author} {\bibfnamefont {Alexander~F}\ \bibnamefont
  {Kemper}}, \ and\ \bibinfo {author} {\bibfnamefont {Wibe~Albert}\
  \bibnamefont {de~Jong}},\ }\href@noop {} {\enquote {\bibinfo {title}
  {Simulating quantum materials with digital quantum computers},}\ } (\bibinfo
  {year} {2021}),\ \Eprint {http://arxiv.org/abs/2101.08836} {arXiv:2101.08836}
  \BibitemShut {NoStop}%
\bibitem [{\citenamefont {Wiebe}\ \emph {et~al.}(2011)\citenamefont {Wiebe},
  \citenamefont {Berry}, \citenamefont {H{\o}yer},\ and\ \citenamefont
  {Sanders}}]{wiebe11}%
  \BibitemOpen
  \bibfield  {author} {\bibinfo {author} {\bibfnamefont {Nathan}\ \bibnamefont
  {Wiebe}}, \bibinfo {author} {\bibfnamefont {Dominic~W}\ \bibnamefont
  {Berry}}, \bibinfo {author} {\bibfnamefont {Peter}\ \bibnamefont {H{\o}yer}},
  \ and\ \bibinfo {author} {\bibfnamefont {Barry~C}\ \bibnamefont {Sanders}},\
  }\bibfield  {title} {\enquote {\bibinfo {title} {{Simulating quantum dynamics
  on a quantum computer}},}\ }\href {\doibase 10.1088/1751-8113/44/44/445308}
  {\bibfield  {journal} {\bibinfo  {journal} {J. Phys. A Math. Theor.}\
  }\textbf {\bibinfo {volume} {44}},\ \bibinfo {pages} {445308} (\bibinfo
  {year} {2011})}\BibitemShut {NoStop}%
\bibitem [{\citenamefont {Childs}\ \emph {et~al.}(2018)\citenamefont {Childs},
  \citenamefont {Maslov}, \citenamefont {Nam}, \citenamefont {Ross},\ and\
  \citenamefont {Su}}]{childs2018toward}%
  \BibitemOpen
  \bibfield  {author} {\bibinfo {author} {\bibfnamefont {Andrew~M}\
  \bibnamefont {Childs}}, \bibinfo {author} {\bibfnamefont {Dmitri}\
  \bibnamefont {Maslov}}, \bibinfo {author} {\bibfnamefont {Yunseong}\
  \bibnamefont {Nam}}, \bibinfo {author} {\bibfnamefont {Neil~J}\ \bibnamefont
  {Ross}}, \ and\ \bibinfo {author} {\bibfnamefont {Yuan}\ \bibnamefont {Su}},\
  }\bibfield  {title} {\enquote {\bibinfo {title} {Toward the first quantum
  simulation with quantum speedup},}\ }\href@noop {} {\bibfield  {journal}
  {\bibinfo  {journal} {Proceedings of the National Academy of Sciences}\
  }\textbf {\bibinfo {volume} {115}},\ \bibinfo {pages} {9456--9461} (\bibinfo
  {year} {2018})}\BibitemShut {NoStop}%
\bibitem [{\citenamefont {M{\"{o}}tt{\"{o}}nen}\ \emph
  {et~al.}(2004)\citenamefont {M{\"{o}}tt{\"{o}}nen}, \citenamefont
  {Vartiainen}, \citenamefont {Bergholm},\ and\ \citenamefont
  {Salomaa}}]{mottonen2004quantum}%
  \BibitemOpen
  \bibfield  {author} {\bibinfo {author} {\bibfnamefont {Mikko}\ \bibnamefont
  {M{\"{o}}tt{\"{o}}nen}}, \bibinfo {author} {\bibfnamefont {Juha~J}\
  \bibnamefont {Vartiainen}}, \bibinfo {author} {\bibfnamefont {Ville}\
  \bibnamefont {Bergholm}}, \ and\ \bibinfo {author} {\bibfnamefont {Martti~M}\
  \bibnamefont {Salomaa}},\ }\bibfield  {title} {\enquote {\bibinfo {title}
  {{Quantum circuits for general multiqubit gates}},}\ }\href {\doibase
  10.1103/PhysRevLett.93.130502} {\bibfield  {journal} {\bibinfo  {journal}
  {Phys. Rev. Lett.}\ }\textbf {\bibinfo {volume} {93}},\ \bibinfo {pages}
  {130502} (\bibinfo {year} {2004})}\BibitemShut {NoStop}%
\bibitem [{\citenamefont {{De Vos}}\ and\ \citenamefont {{De
  Baerdemacker}}(2016)}]{de2016block}%
  \BibitemOpen
  \bibfield  {author} {\bibinfo {author} {\bibfnamefont {A}~\bibnamefont {{De
  Vos}}}\ and\ \bibinfo {author} {\bibfnamefont {S}~\bibnamefont {{De
  Baerdemacker}}},\ }\bibfield  {title} {\enquote {\bibinfo {title}
  {{Block-$ZXZ$ synthesis of an arbitrary quantum circuit}},}\ }\href {\doibase
  10.1103/PhysRevA.94.052317} {\bibfield  {journal} {\bibinfo  {journal} {Phys.
  Rev. A}\ }\textbf {\bibinfo {volume} {94}},\ \bibinfo {pages} {52317}
  (\bibinfo {year} {2016})}\BibitemShut {NoStop}%
\bibitem [{\citenamefont {Iten}\ \emph {et~al.}(2016)\citenamefont {Iten},
  \citenamefont {Colbeck}, \citenamefont {Kukuljan}, \citenamefont {Home},\
  and\ \citenamefont {Christandl}}]{iten2016quantum}%
  \BibitemOpen
  \bibfield  {author} {\bibinfo {author} {\bibfnamefont {Raban}\ \bibnamefont
  {Iten}}, \bibinfo {author} {\bibfnamefont {Roger}\ \bibnamefont {Colbeck}},
  \bibinfo {author} {\bibfnamefont {Ivan}\ \bibnamefont {Kukuljan}}, \bibinfo
  {author} {\bibfnamefont {Jonathan}\ \bibnamefont {Home}}, \ and\ \bibinfo
  {author} {\bibfnamefont {Matthias}\ \bibnamefont {Christandl}},\ }\bibfield
  {title} {\enquote {\bibinfo {title} {{Quantum circuits for isometries}},}\
  }\href {\doibase 10.1103/PhysRevA.93.032318} {\bibfield  {journal} {\bibinfo
  {journal} {Phys. Rev. A}\ }\textbf {\bibinfo {volume} {93}},\ \bibinfo
  {pages} {032318} (\bibinfo {year} {2016})}\BibitemShut {NoStop}%
\bibitem [{\citenamefont {Martinez}\ \emph {et~al.}(2016)\citenamefont
  {Martinez}, \citenamefont {Monz}, \citenamefont {Nigg}, \citenamefont
  {Schindler},\ and\ \citenamefont {Blatt}}]{martinez2016compiling}%
  \BibitemOpen
  \bibfield  {author} {\bibinfo {author} {\bibfnamefont {Esteban~A}\
  \bibnamefont {Martinez}}, \bibinfo {author} {\bibfnamefont {Thomas}\
  \bibnamefont {Monz}}, \bibinfo {author} {\bibfnamefont {Daniel}\ \bibnamefont
  {Nigg}}, \bibinfo {author} {\bibfnamefont {Philipp}\ \bibnamefont
  {Schindler}}, \ and\ \bibinfo {author} {\bibfnamefont {Rainer}\ \bibnamefont
  {Blatt}},\ }\bibfield  {title} {\enquote {\bibinfo {title} {{Compiling
  quantum algorithms for architectures with multi-qubit gates}},}\ }\href
  {\doibase 10.1088/1367-2630/18/6/063029} {\bibfield  {journal} {\bibinfo
  {journal} {New J. Phys.}\ }\textbf {\bibinfo {volume} {18}},\ \bibinfo
  {pages} {063029} (\bibinfo {year} {2016})}\BibitemShut {NoStop}%
\bibitem [{\citenamefont {Khatri}\ \emph {et~al.}(2019)\citenamefont {Khatri},
  \citenamefont {LaRose}, \citenamefont {Poremba}, \citenamefont {Cincio},
  \citenamefont {Sornborger},\ and\ \citenamefont {Coles}}]{khatri2019quantum}%
  \BibitemOpen
  \bibfield  {author} {\bibinfo {author} {\bibfnamefont {Sumeet}\ \bibnamefont
  {Khatri}}, \bibinfo {author} {\bibfnamefont {Ryan}\ \bibnamefont {LaRose}},
  \bibinfo {author} {\bibfnamefont {Alexander}\ \bibnamefont {Poremba}},
  \bibinfo {author} {\bibfnamefont {Lukasz}\ \bibnamefont {Cincio}}, \bibinfo
  {author} {\bibfnamefont {Andrew~T}\ \bibnamefont {Sornborger}}, \ and\
  \bibinfo {author} {\bibfnamefont {Patrick~J}\ \bibnamefont {Coles}},\
  }\bibfield  {title} {\enquote {\bibinfo {title} {{Quantum-assisted quantum
  compiling}},}\ }\href {\doibase 10.22331/q-2019-05-13-140} {\bibfield
  {journal} {\bibinfo  {journal} {Quantum}\ }\textbf {\bibinfo {volume} {3}},\
  \bibinfo {pages} {140} (\bibinfo {year} {2019})}\BibitemShut {NoStop}%
\bibitem [{\citenamefont {Murali}\ \emph {et~al.}(2019)\citenamefont {Murali},
  \citenamefont {Baker}, \citenamefont {Javadi-Abhari}, \citenamefont {Chong},\
  and\ \citenamefont {Martonosi}}]{murali2019noise}%
  \BibitemOpen
  \bibfield  {author} {\bibinfo {author} {\bibfnamefont {Prakash}\ \bibnamefont
  {Murali}}, \bibinfo {author} {\bibfnamefont {Jonathan~M}\ \bibnamefont
  {Baker}}, \bibinfo {author} {\bibfnamefont {Ali}\ \bibnamefont
  {Javadi-Abhari}}, \bibinfo {author} {\bibfnamefont {Frederic~T}\ \bibnamefont
  {Chong}}, \ and\ \bibinfo {author} {\bibfnamefont {Margaret}\ \bibnamefont
  {Martonosi}},\ }\bibfield  {title} {\enquote {\bibinfo {title}
  {{Noise-Adaptive Compiler Mappings for Noisy Intermediate-Scale Quantum
  Computers}},}\ }in\ \href {\doibase 10.1145/3297858.3304075} {\emph {\bibinfo
  {booktitle} {Proceedings of the Twenty-Fourth International Conference on
  Architectural Support for Programming Languages and Operating Systems}}},\
  \bibinfo {series and number} {ASPLOS '19}\ (\bibinfo  {publisher} {ACM},\
  \bibinfo {year} {2019})\ pp.\ \bibinfo {pages} {1015--1029}\BibitemShut
  {NoStop}%
\bibitem [{\citenamefont {Younis}\ \emph {et~al.}(2020)\citenamefont {Younis},
  \citenamefont {Sen}, \citenamefont {Yelick},\ and\ \citenamefont
  {Iancu}}]{younis2020qfast}%
  \BibitemOpen
  \bibfield  {author} {\bibinfo {author} {\bibfnamefont {Ed}~\bibnamefont
  {Younis}}, \bibinfo {author} {\bibfnamefont {Koushik}\ \bibnamefont {Sen}},
  \bibinfo {author} {\bibfnamefont {Katherine}\ \bibnamefont {Yelick}}, \ and\
  \bibinfo {author} {\bibfnamefont {Costin}\ \bibnamefont {Iancu}},\
  }\href@noop {} {\enquote {\bibinfo {title} {{QFAST: Quantum Synthesis Using a
  Hierarchical Continuous Circuit Space}},}\ } (\bibinfo {year} {2020}),\
  \Eprint {http://arxiv.org/abs/2003.04462} {arXiv:2003.04462} \BibitemShut
  {NoStop}%
\bibitem [{\citenamefont {Cincio}\ \emph {et~al.}(2020)\citenamefont {Cincio},
  \citenamefont {Rudinger}, \citenamefont {Sarovar},\ and\ \citenamefont
  {Coles}}]{cincio2020machine}%
  \BibitemOpen
  \bibfield  {author} {\bibinfo {author} {\bibfnamefont {Lukasz}\ \bibnamefont
  {Cincio}}, \bibinfo {author} {\bibfnamefont {Kenneth}\ \bibnamefont
  {Rudinger}}, \bibinfo {author} {\bibfnamefont {Mohan}\ \bibnamefont
  {Sarovar}}, \ and\ \bibinfo {author} {\bibfnamefont {Patrick~J}\ \bibnamefont
  {Coles}},\ }\href@noop {} {\enquote {\bibinfo {title} {Machine learning of
  noise-resilient quantum circuits},}\ } (\bibinfo {year} {2020}),\ \Eprint
  {http://arxiv.org/abs/2007.01210} {arXiv:2007.01210} \BibitemShut {NoStop}%
\bibitem [{\citenamefont {Bassman}\ \emph
  {et~al.}(2020{\natexlab{a}})\citenamefont {Bassman}, \citenamefont {Gulania},
  \citenamefont {Powers}, \citenamefont {Li}, \citenamefont {Linker},
  \citenamefont {Liu}, \citenamefont {Kumar}, \citenamefont {Kalia},
  \citenamefont {Nakano},\ and\ \citenamefont {Vashishta}}]{bassman2020domain}%
  \BibitemOpen
  \bibfield  {author} {\bibinfo {author} {\bibfnamefont {Lindsay}\ \bibnamefont
  {Bassman}}, \bibinfo {author} {\bibfnamefont {Sahil}\ \bibnamefont
  {Gulania}}, \bibinfo {author} {\bibfnamefont {Connor}\ \bibnamefont
  {Powers}}, \bibinfo {author} {\bibfnamefont {Rongpeng}\ \bibnamefont {Li}},
  \bibinfo {author} {\bibfnamefont {Thomas}\ \bibnamefont {Linker}}, \bibinfo
  {author} {\bibfnamefont {Kuang}\ \bibnamefont {Liu}}, \bibinfo {author}
  {\bibfnamefont {T~K~Satish}\ \bibnamefont {Kumar}}, \bibinfo {author}
  {\bibfnamefont {Rajiv~K}\ \bibnamefont {Kalia}}, \bibinfo {author}
  {\bibfnamefont {Aiichiro}\ \bibnamefont {Nakano}}, \ and\ \bibinfo {author}
  {\bibfnamefont {Priya}\ \bibnamefont {Vashishta}},\ }\bibfield  {title}
  {\enquote {\bibinfo {title} {{Domain-specific compilers for dynamic
  simulations of quantum materials on quantum computers}},}\ }\href {\doibase
  10.1088/2058-9565/abbea1} {\bibfield  {journal} {\bibinfo  {journal} {Quantum
  Sci. Technol.}\ }\textbf {\bibinfo {volume} {6}},\ \bibinfo {pages} {14007}
  (\bibinfo {year} {2020}{\natexlab{a}})}\BibitemShut {NoStop}%
\bibitem [{\citenamefont {Botea}\ \emph {et~al.}(2018)\citenamefont {Botea},
  \citenamefont {Kishimoto},\ and\ \citenamefont {Marinescu}}]{botea2018}%
  \BibitemOpen
  \bibfield  {author} {\bibinfo {author} {\bibfnamefont {Adi}\ \bibnamefont
  {Botea}}, \bibinfo {author} {\bibfnamefont {Akihiro}\ \bibnamefont
  {Kishimoto}}, \ and\ \bibinfo {author} {\bibfnamefont {Radu}\ \bibnamefont
  {Marinescu}},\ }\bibfield  {title} {\enquote {\bibinfo {title} {{On the
  complexity of quantum circuit compilation}},}\ }in\ \href@noop {} {\emph
  {\bibinfo {booktitle} {Proceedings of the 11th International Symposium on
  Combinatorial Search (SoCS 2018)}}}\ (\bibinfo {year} {2018})\ pp.\ \bibinfo
  {pages} {138--142}\BibitemShut {NoStop}%
\bibitem [{\citenamefont {Herr}\ \emph {et~al.}(2017)\citenamefont {Herr},
  \citenamefont {Nori},\ and\ \citenamefont {Devitt}}]{herr17}%
  \BibitemOpen
  \bibfield  {author} {\bibinfo {author} {\bibfnamefont {Daniel}\ \bibnamefont
  {Herr}}, \bibinfo {author} {\bibfnamefont {Franco}\ \bibnamefont {Nori}}, \
  and\ \bibinfo {author} {\bibfnamefont {Simon~J}\ \bibnamefont {Devitt}},\
  }\bibfield  {title} {\enquote {\bibinfo {title} {{Optimization of lattice
  surgery is NP-hard}},}\ }\href {\doibase 10.1038/s41534-017-0035-1}
  {\bibfield  {journal} {\bibinfo  {journal} {npj Quantum Inf.}\ }\textbf
  {\bibinfo {volume} {3}},\ \bibinfo {pages} {35} (\bibinfo {year}
  {2017})}\BibitemShut {NoStop}%
\bibitem [{\citenamefont {Berry}\ \emph {et~al.}(2007)\citenamefont {Berry},
  \citenamefont {Ahokas}, \citenamefont {Cleve},\ and\ \citenamefont
  {Sanders}}]{berry2007efficient}%
  \BibitemOpen
  \bibfield  {author} {\bibinfo {author} {\bibfnamefont {Dominic~W}\
  \bibnamefont {Berry}}, \bibinfo {author} {\bibfnamefont {Graeme}\
  \bibnamefont {Ahokas}}, \bibinfo {author} {\bibfnamefont {Richard}\
  \bibnamefont {Cleve}}, \ and\ \bibinfo {author} {\bibfnamefont {Barry~C}\
  \bibnamefont {Sanders}},\ }\bibfield  {title} {\enquote {\bibinfo {title}
  {{Efficient quantum algorithms for simulating sparse Hamiltonians}},}\ }\href
  {\doibase 10.1007/s00220-006-0150-x} {\bibfield  {journal} {\bibinfo
  {journal} {Comm. Math. Phys.}\ }\textbf {\bibinfo {volume} {270}},\ \bibinfo
  {pages} {359--371} (\bibinfo {year} {2007})}\BibitemShut {NoStop}%
\bibitem [{\citenamefont {Childs}\ and\ \citenamefont
  {Kothari}(2010)}]{childs2009limitations}%
  \BibitemOpen
  \bibfield  {author} {\bibinfo {author} {\bibfnamefont {Andrew~M}\
  \bibnamefont {Childs}}\ and\ \bibinfo {author} {\bibfnamefont {Robin}\
  \bibnamefont {Kothari}},\ }\bibfield  {title} {\enquote {\bibinfo {title}
  {{Limitations on the simulation of non-sparse Hamiltonians}},}\ }\href
  {\doibase 10.26421/QIC10.7-8-7} {\bibfield  {journal} {\bibinfo  {journal}
  {Quantum Inf. Comput.}\ }\textbf {\bibinfo {volume} {10}},\ \bibinfo {pages}
  {669--684} (\bibinfo {year} {2010})},\ \Eprint
  {http://arxiv.org/abs/0908.4398} {0908.4398} \BibitemShut {NoStop}%
\bibitem [{\citenamefont {Atia}\ and\ \citenamefont
  {Aharonov}(2017)}]{atia2017fast}%
  \BibitemOpen
  \bibfield  {author} {\bibinfo {author} {\bibfnamefont {Yosi}\ \bibnamefont
  {Atia}}\ and\ \bibinfo {author} {\bibfnamefont {Dorit}\ \bibnamefont
  {Aharonov}},\ }\bibfield  {title} {\enquote {\bibinfo {title}
  {{Fast-forwarding of Hamiltonians and exponentially precise measurements}},}\
  }\href {\doibase 10.1038/s41467-017-01637-7} {\bibfield  {journal} {\bibinfo
  {journal} {Nat. Commun.}\ }\textbf {\bibinfo {volume} {8}},\ \bibinfo {pages}
  {1572} (\bibinfo {year} {2017})}\BibitemShut {NoStop}%
\bibitem [{\citenamefont {C{\^{i}}rstoiu}\ \emph {et~al.}(2020)\citenamefont
  {C{\^{i}}rstoiu}, \citenamefont {Holmes}, \citenamefont {Iosue},
  \citenamefont {Cincio}, \citenamefont {Coles},\ and\ \citenamefont
  {Sornborger}}]{cirstoiu2020variational}%
  \BibitemOpen
  \bibfield  {author} {\bibinfo {author} {\bibfnamefont {Cristina}\
  \bibnamefont {C{\^{i}}rstoiu}}, \bibinfo {author} {\bibfnamefont {Zo{\"{e}}}\
  \bibnamefont {Holmes}}, \bibinfo {author} {\bibfnamefont {Joseph}\
  \bibnamefont {Iosue}}, \bibinfo {author} {\bibfnamefont {Lukasz}\
  \bibnamefont {Cincio}}, \bibinfo {author} {\bibfnamefont {Patrick~J}\
  \bibnamefont {Coles}}, \ and\ \bibinfo {author} {\bibfnamefont {Andrew}\
  \bibnamefont {Sornborger}},\ }\bibfield  {title} {\enquote {\bibinfo {title}
  {{Variational fast forwarding for quantum simulation beyond the coherence
  time}},}\ }\href {\doibase 10.1038/s41534-020-00302-0} {\bibfield  {journal}
  {\bibinfo  {journal} {npj Quantum Inf.}\ }\textbf {\bibinfo {volume} {6}},\
  \bibinfo {pages} {82} (\bibinfo {year} {2020})}\BibitemShut {NoStop}%
\bibitem [{\citenamefont {Valiant}(2002)}]{valiant2002quantum}%
  \BibitemOpen
  \bibfield  {author} {\bibinfo {author} {\bibfnamefont {Leslie~G}\
  \bibnamefont {Valiant}},\ }\bibfield  {title} {\enquote {\bibinfo {title}
  {{Quantum circuits that can be simulated classically in polynomial time}},}\
  }\href {\doibase 10.1137/S0097539700377025} {\bibfield  {journal} {\bibinfo
  {journal} {SIAM J. Comput.}\ }\textbf {\bibinfo {volume} {31}},\ \bibinfo
  {pages} {1229--1254} (\bibinfo {year} {2002})}\BibitemShut {NoStop}%
\bibitem [{\citenamefont {Vidal}\ and\ \citenamefont
  {Dawson}(2004)}]{vidal2004universal}%
  \BibitemOpen
  \bibfield  {author} {\bibinfo {author} {\bibfnamefont {Guifre}\ \bibnamefont
  {Vidal}}\ and\ \bibinfo {author} {\bibfnamefont {Christopher~M}\ \bibnamefont
  {Dawson}},\ }\bibfield  {title} {\enquote {\bibinfo {title} {{Universal
  quantum circuit for two-qubit transformations with three controlled-NOT
  gates}},}\ }\href {\doibase 10.1103/PhysRevA.69.010301} {\bibfield  {journal}
  {\bibinfo  {journal} {Phys. Rev. A}\ }\textbf {\bibinfo {volume} {69}},\
  \bibinfo {pages} {10301} (\bibinfo {year} {2004})}\BibitemShut {NoStop}%
\bibitem [{\citenamefont {Poulin}\ \emph {et~al.}(2011)\citenamefont {Poulin},
  \citenamefont {Qarry}, \citenamefont {Somma},\ and\ \citenamefont
  {Verstraete}}]{poulin2011quantum}%
  \BibitemOpen
  \bibfield  {author} {\bibinfo {author} {\bibfnamefont {David}\ \bibnamefont
  {Poulin}}, \bibinfo {author} {\bibfnamefont {Angie}\ \bibnamefont {Qarry}},
  \bibinfo {author} {\bibfnamefont {Rolando}\ \bibnamefont {Somma}}, \ and\
  \bibinfo {author} {\bibfnamefont {Frank}\ \bibnamefont {Verstraete}},\
  }\bibfield  {title} {\enquote {\bibinfo {title} {{Quantum simulation of
  time-dependent Hamiltonians and the convenient illusion of Hilbert space}},}\
  }\href {\doibase 10.1103/PhysRevLett.106.170501} {\bibfield  {journal}
  {\bibinfo  {journal} {Phys. Rev. Lett.}\ }\textbf {\bibinfo {volume} {106}},\
  \bibinfo {pages} {170501} (\bibinfo {year} {2011})}\BibitemShut {NoStop}%
\bibitem [{\citenamefont {Trotter}(1959)}]{trotter1959product}%
  \BibitemOpen
  \bibfield  {author} {\bibinfo {author} {\bibfnamefont {Hale~Freeman}\
  \bibnamefont {Trotter}},\ }\bibfield  {title} {\enquote {\bibinfo {title}
  {{On the product of semi-groups of operators}},}\ }\href {\doibase
  10.1112/jlms/s1-38.1.66} {\bibfield  {journal} {\bibinfo  {journal} {Proc.
  Amer. Math. Soc.}\ }\textbf {\bibinfo {volume} {10}},\ \bibinfo {pages}
  {545--551} (\bibinfo {year} {1959})}\BibitemShut {NoStop}%
\bibitem [{\citenamefont {Smith}\ \emph {et~al.}(2019)\citenamefont {Smith},
  \citenamefont {Kim}, \citenamefont {Pollmann},\ and\ \citenamefont
  {Knolle}}]{smith2019simulating}%
  \BibitemOpen
  \bibfield  {author} {\bibinfo {author} {\bibfnamefont {Adam}\ \bibnamefont
  {Smith}}, \bibinfo {author} {\bibfnamefont {M~S}\ \bibnamefont {Kim}},
  \bibinfo {author} {\bibfnamefont {Frank}\ \bibnamefont {Pollmann}}, \ and\
  \bibinfo {author} {\bibfnamefont {Johannes}\ \bibnamefont {Knolle}},\
  }\bibfield  {title} {\enquote {\bibinfo {title} {{Simulating quantum
  many-body dynamics on a current digital quantum computer}},}\ }\href
  {\doibase 10.1038/s41534-019-0217-0} {\bibfield  {journal} {\bibinfo
  {journal} {npj Quantum Inf.}\ }\textbf {\bibinfo {volume} {5}},\ \bibinfo
  {pages} {106} (\bibinfo {year} {2019})}\BibitemShut {NoStop}%
\bibitem [{\citenamefont {Bassman}\ \emph
  {et~al.}(2020{\natexlab{b}})\citenamefont {Bassman}, \citenamefont {Liu},
  \citenamefont {Krishnamoorthy}, \citenamefont {Linker}, \citenamefont {Geng},
  \citenamefont {Shebib}, \citenamefont {Fukushima}, \citenamefont {Shimojo},
  \citenamefont {Kalia}, \citenamefont {Nakano} \emph
  {et~al.}}]{bassman2020towards}%
  \BibitemOpen
  \bibfield  {author} {\bibinfo {author} {\bibfnamefont {Lindsay}\ \bibnamefont
  {Bassman}}, \bibinfo {author} {\bibfnamefont {Kuang}\ \bibnamefont {Liu}},
  \bibinfo {author} {\bibfnamefont {Aravind}\ \bibnamefont {Krishnamoorthy}},
  \bibinfo {author} {\bibfnamefont {Thomas}\ \bibnamefont {Linker}}, \bibinfo
  {author} {\bibfnamefont {Yifan}\ \bibnamefont {Geng}}, \bibinfo {author}
  {\bibfnamefont {Daniel}\ \bibnamefont {Shebib}}, \bibinfo {author}
  {\bibfnamefont {Shogo}\ \bibnamefont {Fukushima}}, \bibinfo {author}
  {\bibfnamefont {Fuyuki}\ \bibnamefont {Shimojo}}, \bibinfo {author}
  {\bibfnamefont {Rajiv~K}\ \bibnamefont {Kalia}}, \bibinfo {author}
  {\bibfnamefont {Aiichiro}\ \bibnamefont {Nakano}},  \emph {et~al.},\
  }\bibfield  {title} {\enquote {\bibinfo {title} {{Towards simulation of the
  dynamics of materials on quantum computers}},}\ }\href {\doibase
  10.1103/PhysRevB.101.184305} {\bibfield  {journal} {\bibinfo  {journal}
  {Phys. Rev. B}\ }\textbf {\bibinfo {volume} {101}},\ \bibinfo {pages}
  {184305} (\bibinfo {year} {2020}{\natexlab{b}})}\BibitemShut {NoStop}%
\bibitem [{BQS(2021)}]{BQSKit}%
  \BibitemOpen
  \href@noop {} {\enquote {\bibinfo {title} {Berkeley quantum synthesis
  toolkit},}\ } (\bibinfo {year} {2021}),\ \bibinfo {note}
  {\url{https://bqskit.lbl.gov}}\BibitemShut {NoStop}%
\bibitem [{\citenamefont {{Lindsay Bassman and Roel Van Beeumen and Ed Younis
  and Ethan Smith}}(2021)}]{github}%
  \BibitemOpen
  \bibfield  {author} {\bibinfo {author} {\bibnamefont {{Lindsay Bassman and
  Roel Van Beeumen and Ed Younis and Ethan Smith}}},\ }\href@noop {} {\enquote
  {\bibinfo {title} {Constant depth circuits for dynamic simulations of
  materials on quantum computers},}\ } (\bibinfo {year} {2021}),\ \bibinfo
  {note}
  {\url{https://github.com/lebassman/Constant_Depth_Circuits}}\BibitemShut
  {NoStop}%
\bibitem [{\citenamefont {Brod}\ and\ \citenamefont
  {Galv{\~{a}}o}(2012)}]{brod2012geometries}%
  \BibitemOpen
  \bibfield  {author} {\bibinfo {author} {\bibfnamefont {Daniel~J}\
  \bibnamefont {Brod}}\ and\ \bibinfo {author} {\bibfnamefont {Ernesto~F}\
  \bibnamefont {Galv{\~{a}}o}},\ }\bibfield  {title} {\enquote {\bibinfo
  {title} {{Geometries for universal quantum computation with matchgates}},}\
  }\href {\doibase 10.1103/PhysRevA.86.052307} {\bibfield  {journal} {\bibinfo
  {journal} {Phys. Rev. A}\ }\textbf {\bibinfo {volume} {86}},\ \bibinfo
  {pages} {52307} (\bibinfo {year} {2012})}\BibitemShut {NoStop}%
\bibitem [{\citenamefont {Brod}\ and\ \citenamefont
  {Childs}(2014)}]{brod2013computational}%
  \BibitemOpen
  \bibfield  {author} {\bibinfo {author} {\bibfnamefont {Daniel~J.}\
  \bibnamefont {Brod}}\ and\ \bibinfo {author} {\bibfnamefont {Andrew~M.}\
  \bibnamefont {Childs}},\ }\bibfield  {title} {\enquote {\bibinfo {title}
  {{The computational power of matchgates and the XY interaction on arbitrary
  graphs}},}\ }\href {\doibase 10.26421/QIC14.11-12} {\bibfield  {journal}
  {\bibinfo  {journal} {Quantum Inf. Comput.}\ }\textbf {\bibinfo {volume}
  {14}},\ \bibinfo {pages} {901--916} (\bibinfo {year} {2014})},\ \Eprint
  {http://arxiv.org/abs/1308.1463} {1308.1463} \BibitemShut {NoStop}%
\end{thebibliography}%

\end{document}